\definecolor{keywordcolor}{rgb}{0.7, 0.1, 0.1}
\definecolor{tacticcolor}{rgb}{0.0, 0.1, 0.6}
\definecolor{commentcolor}{rgb}{0.4, 0.4, 0.4}
\definecolor{symbolcolor}{rgb}{0.0, 0.1, 0.6}
\definecolor{sortcolor}{rgb}{0.1, 0.5, 0.1}
\definecolor{attributecolor}{rgb}{0.7, 0.1, 0.1}
\title{An Elementary Formal Proof of the Group Law on Weierstrass Elliptic Curves in any Characteristic}
\titlerunning{The Group Law on Weierstrass Elliptic Curves}
\author{David Kurniadi Angdinata}{London School of Geometry and Number Theory, United Kingdom \and \url{https://multramate.github.io}}{ucahdka@ucl.ac.uk}{https://orcid.org/0000-0003-2096-1864}{}
\author{Junyan Xu}{Cancer Data Science Laboratory, National Cancer Institute, United States}{junyanxu.math@gmail.com}{https://orcid.org/0000-0002-3789-2319}{}
\authorrunning{D K Angdinata and J Xu}
\keywords{formal math, algebraic geometry, elliptic curve, group law, Lean, mathlib}
\begin{document}

\maketitle

\begin{abstract}
Elliptic curves are fundamental objects in number theory and algebraic geometry, whose points over a field form an abelian group under a geometric addition law. Any elliptic curve over a field admits a Weierstrass model, but prior formal proofs that the addition law is associative in this model involve either advanced algebraic geometry or tedious computation, especially in characteristic two. We formalise in the Lean theorem prover, the type of nonsingular points of a Weierstrass curve over a field of any characteristic and a purely algebraic proof that it forms an abelian group.
\end{abstract}

\section{Introduction}

\subsection{Elliptic curves}

In its earliest form, algebraic geometry is the branch of mathematics studying the solutions to systems of polynomial equations over a base field $ F $, namely sets of the form
\[ \{(x_1, \dots, x_n) \in F^n : f_1(x_1, \dots, x_n) = 0, \ \dots, \ f_k(x_1, \dots, x_n) = 0\}, \]
for some polynomials $ f_i \in F[X_1, \dots, X_n] $. These are called \textbf{affine varieties}, and they can be endowed with topologies and a notion of morphisms which makes them simultaneously geometric objects. General \textbf{varieties} are locally modelled on affine ones, with morphisms between them locally given by polynomials, and are often classified by their geometric properties such as \emph{smoothness}, or invariants such as the \emph{dimension} or the \emph{genus}.

Having dimension one and genus one, \emph{elliptic curves} are amongst the simplest varieties with respect to these geometric notions, and their set of points can be endowed with the structure of an abelian group. When the base field is the rationals $ \mathbb{Q} $, a common definition uses the \emph{short Weierstrass model}, given by the equation $ y^2 = x^3 + ax + b $ for some fixed $ a, b \in \mathbb{Q} $, and its group law can be defined explicitly by quotients of polynomial functions.

\pagebreak

Elliptic curves are blessed with an extremely rich theory, spanning the fields of algebraic geometry, complex analysis, number theory, representation theory, dynamical systems, and even information security. The \emph{Birch and Swinnerton-Dyer conjecture} \cite{bsd} in number theory, one of the seven \emph{Millennium Prize Problems}, is an equality between an analytic quantity of an elliptic curve over $ \mathbb{Q} $ and an algebraic quantity defined in terms of its group structure. Their close relation with modular forms is precisely the content of the \emph{Taniyama--Shimura conjecture} proven by Andrew Wiles \cite{wiles}, which implies \emph{Fermat's last theorem} and laid the foundations of the \emph{Langlands programme}, a web of influential conjectures linking number theory and geometry described to be ``kind of a grand unified theory of mathematics'' \cite{langlands}. Outside of mathematics, elliptic curves over finite fields see applications in primality proving \cite{atkin} and integer factorisation \cite{lenstra}, as well as in public key cryptography \cite{ecdh}.


\subsection{Formalisation attempts}

The group law on an elliptic curve in the short Weierstrass model over a field $ F $ has been formalised in several theorem provers, \footnote{see \cref{sec:relate} for related work} but this model fails to be an elliptic curve when $ \mathrm{char}(F) = 2 $, and there has been no known successful attempts to formalise the group law in a \emph{universal} model that captures all elliptic curves in all $ \mathrm{char}(F) $. The issue was that a computational proof of associativity in any universal model is, as Russinoff described, ``an elementary but computationally intensive arithmetic exercise'' involving massive polynomials \cite{sutherland}, \footnote{see \cref{sec:experiment} for experimental results} while a typical conceptual proof is ``a deep theorem of algebraic or projective geometry'' requiring prior formalisation of advanced geometric constructions, with ``evidence that an elementary hand proof of this result is a practical impossibility'' \cite{russinoff}. On the other hand, having the group law in $ \mathrm{char}(F) = 2 $ is necessary for certain applications, such as the proof of Fermat's last theorem. It is less crucial from an information security viewpoint, as curves over binary fields are prone to attacks and no longer recommended by NIST \cite{nist}.

We give a conceptual yet computation-light proof of the group law in the full \emph{Weierstrass model}, which is universal in all $ \mathrm{char}(F) $. The argument is purely algebraic and easily surveyable, in the sense that all logical deductions and necessary computations can be performed by hand in a matter of minutes. The proof is formalised in \emph{Lean 3} \cite{lean}, an interactive theorem prover based on the calculus of constructions with inductive types, and is integrated as part of its monolithic mathematical library \texttt{mathlib} \cite{mathlib}. The implementation extensively uses existing constructions in the linear algebra and ring theory libraries of \texttt{mathlib}, particularly constructions and results surrounding \texttt{algebra.norm} and \texttt{class\_group} \cite{docs}. The relevant code is primarily split into two files \texttt{weierstrass.lean} and \texttt{point.lean} under the folder \texttt{algebraic\_geometry/elliptic\_curve} in the \texttt{associativity} branch of \texttt{mathlib}, both of which are currently undergoing reviews for their merge. With this paper, we hope that our simple proof will be replicated and will open the way for the formalisation of elliptic curve cryptography in many other theorem provers, which has been a major motivation of recent formalisation attempts \cite{russinoff, fox, hales, bartzia}.

The remainder of this paper will describe the relevant constructions (\cref{sec:weierstrass}), detail the mathematical argument of the proof (\cref{sec:group}), and discuss implementation considerations (\cref{sec:discuss}). Throughout, definitions will be described in code snippets where relevant, but proofs of lemmas will be outlined mathematically for the sake of clarity. The reader may refer to the \texttt{mathlib} documentation \cite{docs} for definitions and lemmas involved.

\pagebreak

\section{Weierstrass equations}
\label{sec:weierstrass}

Let $ F $ be a field. In the sense of modern algebraic geometry, an \textbf{elliptic curve} $ E $ over $ F $ is a smooth projective curve \footnote{variety of dimension one} of genus one equipped with a base point $ \mathcal{O}_E \in E $ with coordinates in $ F $. More concretely, any elliptic curve over $ F $ admits a model in the projective plane $ \mathbb{P}_F^2 $ defined by an explicit polynomial equation in homogeneous coordinates $ [X : Y : Z] $.

\begin{proposition}
\label{prop:weierstrass}
If $ E $ is an elliptic curve over $ F $, then there are rational functions $ x, y : E \to F $ such that the map $ \phi : E \to \mathbb{P}_F^2 $ given by $ \phi(P) = [x(P) : y(P) : 1] $ maps $ \mathcal{O}_E $ to $ [0 : 1 : 0] $, and defines an isomorphism of varieties between $ E $ and the curve
\[ Y^2Z + a_1XYZ + a_3YZ^2 = X^3 + a_2X^2Z + a_4XZ^2 + a_6Z^3, \]
for some coefficients $ a_i \in F $. Conversely, any curve $ W $ in $ \mathbb{P}_F^2 $ given by such an equation, with coefficients $ a_i \in F $, is an elliptic curve over $ F $ with base point $ [0 : 1 : 0] $ if the quantity
\begin{align*}
\Delta_W := & -(a_1^2 + 4a_2)^2(a_1^2a_6 + 4a_2a_6 - a_1a_3a_4 + a_2a_3^2 - a_4^2) - 8(2a_4 + a_1a_3)^3 \\
& - 27(a_3^2 + 4a_6)^2 + 9(a_1^2 + 4a_2)(2a_4 + a_1a_3)(a_3^2 + 4a_6) \in F
\end{align*}
is nonzero.
\end{proposition}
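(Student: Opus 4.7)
The plan is to handle the two directions separately, using the Riemann--Roch theorem for the forward direction and the genus--degree formula together with an explicit singularity computation for the converse. Since this proposition is stated as classical background and will not itself be formalised in this work, I will only sketch the standard textbook argument (e.g.\ Silverman's \emph{Arithmetic of Elliptic Curves}).

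For the forward direction, I would apply Riemann--Roch to the divisors $n[\mathcal{O}_E]$ on the smooth genus-one curve $E$: since $g = 1$, one has $\dim_F \mathcal{L}(n[\mathcal{O}_E]) = n$ for every $n \geq 1$. Then I would choose $x \in \mathcal{L}(2[\mathcal{O}_E])$ with a pole of order exactly $2$ at $\mathcal{O}_E$, and $y \in \mathcal{L}(3[\mathcal{O}_E])$ with a pole of order exactly $3$. The six-dimensional space $\mathcal{L}(6[\mathcal{O}_E])$ then contains the seven elements $1, x, y, x^2, xy, x^3, y^2$, so there is a nontrivial linear dependence; inspecting pole orders at $\mathcal{O}_E$ shows that $x^3$ and $y^2$ are the unique elements of pole order exactly $6$, so both must appear with nonzero coefficient, and after rescaling $x$ and $y$ to normalise the leading coefficients one recovers precisely the Weierstrass equation for some $a_i \in F$. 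The map $\phi = [x : y : 1]$ sends $\mathcal{O}_E$ to $[0 : 1 : 0]$ because $y$ has a higher-order pole than $x$ there, and a standard degree computation shows $\phi$ is a birational morphism onto its image, hence an isomorphism of smooth projective curves.

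For the converse, the genus--degree formula for smooth plane curves gives genus $\tfrac{(d-1)(d-2)}{2} = 1$ when $d = 3$, so it suffices to show that the cubic $W$ is smooth everywhere if and only if $\Delta_W \neq 0$. Smoothness at the point $[0 : 1 : 0]$ is verified directly by dehomogenising with respect to $Y$. At the affine points, I would ask when the two partial derivatives of the defining polynomial vanish simultaneously on $W$; eliminating $Y$ via the curve equation reduces this to detecting a common root of two polynomials in $X$, and the resultant of these polynomials is, up to a unit, exactly the expression $\Delta_W$ displayed in the statement.

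The hard part will be the last step: verifying the explicit form of $\Delta_W$ in every characteristic. In characteristic two one cannot complete the square in $Y$ to reduce to a short Weierstrass form, so the resultant must be computed directly from the full five-parameter equation; the resulting polynomial identity is large and must be checked without assuming $2$ or $3$ is invertible, which is precisely the kind of characteristic-uniform computation that motivates the elementary approach taken in the rest of the paper.
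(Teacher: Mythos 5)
Your sketch is precisely the standard Riemann--Roch argument from Silverman [Proposition III.3.1], which is all the paper itself offers (its proof is a one-line citation to that result), so the approach matches and the outline is correct. The only detail worth adding is that in the forward direction, before concluding that the birational morphism $\phi$ is an isomorphism of smooth projective curves, one must also check that the image Weierstrass cubic is itself smooth, which follows because a singular Weierstrass cubic is rational and hence cannot be birational to a genus-one curve.
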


\begin{proof}
This is a consequence of the Riemann--Roch theorem \cite[Proposition III.3.1]{silverman}.
\end{proof}

This is the \textbf{Weierstrass model} of $ E $, and such a curve is called a \textbf{Weierstrass curve}, whose corresponding \textbf{Weierstrass equation} in the affine chart $ Z \ne 0 $ is
\[ Y^2 + a_1XY + a_3Y = X^3 + a_2X^2 + a_4X + a_6. \]
In this model, the smoothness condition on $ E $ becomes equivalent to the \textbf{discriminant} $ \Delta_E \in F $ being nonzero, so an elliptic curve over $ F $ may instead be defined as a Weierstrass curve with the discriminant condition, which is more amenable for computational purposes.

\subsection{Weierstrass curves}

Let $ F $ be a commutative ring, and let $ W $ be a Weierstrass curve over $ F $. Explicitly, this is merely the data of five coefficients $ a_1, a_2, a_3, a_4, a_6 \in F $, noting that the Weierstrass equation is not visible at this stage. For the sake of generality, the structure \texttt{weierstrass\_curve} is defined more generally over an arbitrary type $ F $, but all subsequent constructions will assume that $ F $ is at least a commutative ring. The structure \texttt{elliptic\_curve} then \texttt{extends weierstrass\_curve} by adding the data of an element $ \Delta' $ in the group of units $ F^\times $ of $ F $ and a proof that $ \Delta' = \Delta_W $, so most definitions for \texttt{weierstrass\_curve} carry over automatically.
\begin{lstlisting}
structure weierstrass_curve (F : Type) := (a₁ a₂ a₃ a₄ a₆ : F)

structure elliptic_curve (F : Type) [comm_ring F] extends weierstrass_curve F :=
  (Δ' : Fˣ) (coe_Δ' : ↑Δ' = to_weierstrass_curve.Δ)
\end{lstlisting}
Here, \texttt{to\_weierstrass\_curve} is a function generated automatically by the \texttt{extends} keyword, which projects an \texttt{elliptic\_curve} down to its underlying \texttt{weierstrass\_curve} by forgetting $ \Delta' $ and the proof that $ \Delta' = \Delta_W $. Note that \texttt{elliptic\_curve} was originally defined in one stretch by Buzzard, but is now refactored through the more general \texttt{weierstrass\_curve}.

\pagebreak

\begin{remark}
\label{rem}
This definition of an elliptic curve is universal over a large class of commutative rings, namely those with trivial Picard group \cite[Section 2.2]{katz}, which includes fields, and also local rings and unique factorisation domains. In general, an elliptic curve $ E $ may be defined relative to an arbitrary scheme $ S $ as a smooth proper morphism $ E \to S $ in the category of schemes whose geometric fibres are all integral curves of genus one, equipped with a section $ S \to E $ that plays the role of the base point $ \mathcal{O}_E $ for all fibres simultaneously. When $ S $ is the spectrum of such a commutative ring, the Riemann--Roch theorem can be generalised so that $ E $ remains isomorphic to a Weierstrass curve, but over a general commutative ring, $ E $ only has Weierstrass equations locally that may not patch together to form a global equation.
\end{remark}

The discriminant $ \Delta_W \in F $ is expressed entirely in terms of the five coefficients, but it is clearer to extract intermediate quantities \cite[Section III.1]{silverman} to simplify the large expression.
\begin{lstlisting}
namespace weierstrass_curve

variables {F : Type} [comm_ring F] (W : weierstrass_curve F)

def b₂ : F := W.a₁^2 + 4*W.a₂
def b₄ : F := 2*W.a₄ + W.a₁*W.a₃
def b₆ : F := W.a₃^2 + 4*W.a₆
def b₈ : F := W.a₁^2*W.a₆ + 4*W.a₂*W.a₆ - W.a₁*W.a₃*W.a₄ + W.a₂*W.a₃^2 - W.a₄^2

def Δ : F := -W.b₂^2*W.b₈ - 8*W.b₄^3 - 27*W.b₆^2 + 9*W.b₂*W.b₄*W.b₆
\end{lstlisting}
Here, dot notation allows for the fields corresponding to the five coefficients $ a_i \in F $ to be accessed as \texttt{W.a\ensuremath{_i}}, and living inside the \texttt{namespace weierstrass\_curve} means that the quantities $ b_i \in F $ and $ \Delta \in F $ may also be accessed as \texttt{W.b\ensuremath{_i}} and \texttt{W.\ensuremath{\Delta}} respectively.

These quantities are indexed as such as a result of their transformation upon applying the linear change of variables $ (X, Y) \mapsto (u^2X + r, u^3Y + u^2sX + t) $, for some $ u \in F^\times $ and some $ r, s, t \in F $. In fact, these are all the possible isomorphisms of varieties between elliptic curves in the Weierstrass model \cite[Proposition III.3.1]{silverman}.
\begin{lstlisting}
variables (u : Fˣ) (r s t : F)

@[simps] def variable_change : weierstrass_curve F :=
  { a₁ := ↑u⁻¹*(W.a₁ + 2*s),
    a₂ := ↑u⁻¹^2*(W.a₂ - s*W.a₁ + 3*r - s^2),
    a₃ := ↑u⁻¹^3*(W.a₃ + r*W.a₁ + 2*t),
    a₄ := ↑u⁻¹^4*(W.a₄ - s*W.a₃ + 2*r*W.a₂ - (t + r*s)*W.a₁ + 3*r^2 - 2*s*t),
    a₆ := ↑u⁻¹^6*(W.a₆ + r*W.a₄ + r^2*W.a₂ + r^3 - t*W.a₃ - t^2 - r*t*W.a₁) }

@[simp] lemma variable_change_b₂ : (W.variable_change u r s t).b₂ = ↑u⁻¹^2*(...)
@[simp] lemma variable_change_b₄ : (W.variable_change u r s t).b₄ = ↑u⁻¹^4*(...)
@[simp] lemma variable_change_b₆ : (W.variable_change u r s t).b₆ = ↑u⁻¹^6*(...)
@[simp] lemma variable_change_b₈ : (W.variable_change u r s t).b₈ = ↑u⁻¹^8*(...)

@[simp] lemma variable_change_Δ : (W.variable_change u r s t).Δ = ↑u⁻¹^12*W.Δ
\end{lstlisting}
Here, \texttt{variable\_change} defines a new \texttt{weierstrass\_curve} under the change of variables by explicitly stating how each of the five coefficients are transformed, and is tagged with \texttt{simps} to automatically generate \texttt{simp} lemmas corresponding to each of the five projections. The exact expressions for the transformed quantities are not too important, but note that their indices precisely correspond to the exponent of $ u^{-1} \in F^\times $ in the transformation.

\pagebreak

\subsection{Equations and nonsingularity}

Now consider the polynomial in $ F[X, Y] $ associated to $ W $ denoted by
\[ W(X, Y) := Y^2 + (a_1X + a_3)Y - (X^3 + a_2X^2 + a_4X + a_6), \]
so that the Weierstrass equation literally reads $ W(X, Y) = 0 $, and its partial derivatives
\[ W_X(X, Y) := a_1Y - (3X^2 + 2a_2X + a_4), \qquad W_Y(X, Y) := 2Y + a_1X + a_3. \]
When they do not simultaneously vanish when evaluated at an affine point $ (x, y) \in W $, the affine point is said to be \textbf{nonsingular}. This is implemented slightly confusingly as follows. \footnote{this representation of bivariate polynomials will be explained in \cref{sec:implementation}}
\begin{lstlisting}
def polynomial : F[X][Y] :=
  Y^2 + C (C W.a₁*X + C W.a₃)*Y - C (X^3 + C W.a₂*X^2 + C W.a₄*X + C W.a₆)
def equation (x y : F) : Prop := (W.polynomial.eval (C y)).eval x = 0

def polynomial_X : F[X][Y] := C (C W.a₁)*Y - C (C 3*X^2 + C (2*W.a₂)*X + C W.a₄)
def polynomial_Y : F[X][Y] := C (C 2)*Y + C (C W.a₁*X + C W.a₃)
def nonsingular (x y : F) : Prop :=
  W.equation x y ∧ ((W.polynomial_X.eval (C y)).eval x ≠ 0
                    ∨ (W.polynomial_Y.eval (C y)).eval x ≠ 0)
\end{lstlisting}
Here, \texttt{F[X][Y]} denotes the polynomial ring over the polynomial ring over \texttt{F}, to simulate the bivariate polynomial ring $ F[X, Y] $ over $ F $. The outer variable $ Y $ is denoted by the symbol \texttt{Y} and the inner variable $ X $ is denoted by the constant function \texttt{C} applied to the symbol \texttt{X}, while actual constants are enclosed in two layers of the constant function \texttt{C}.

\begin{remark}
This definition of nonsingularity in terms of partial derivatives is valid and convenient when the base ring is a field, but over a general commutative ring the same notion should be characterised locally with a notion of smoothness relative to the base spectrum.
\end{remark}

Many properties of Weierstrass curves remain invariant under the aforementioned changes of variables, and it is often easier to prove results for Weierstrass equations with fewer terms. For instance, if $ \mathrm{char}(F) \ne 2 $, then $ (X, Y) \mapsto (X, Y - \tfrac{1}{2}a_1X - \tfrac{1}{2}a_3) $ completes the square for $ W(X, Y) $ and eliminates the $ XY $ and $ Y $ terms, and if further $ \mathrm{char}(F) \ne 3 $, then $ (X, Y) \mapsto (X - \tfrac{1}{12}b_2, Y) $ completes the cube for $ W(X, Y) $ and eliminates the $ X^2 $ term as well.

Perhaps a more prominent application is the proof that, for any affine point $ (x, y) \in W $, the nonvanishing of $ \Delta_W $ implies that $ (x, y) $ is nonsingular. This statement is easy for $ (x, y) = (0, 0) $, since $ (0, 0) \in W $ implies that $ a_6 = 0 $, and $ (0, 0) $ being singular implies that $ a_3 = a_4 = 0 $, so that $ \Delta_W = 0 $ by a simple substitution. On the other hand, for any $ (x, y) \in F^2 $, the change of variables $ (X, Y) \mapsto (X - x, Y - y) $ merely translates $ W $ so that $ (0, 0) $ gets mapped to $ (x, y) $, so the same statement clearly also holds for $ (x, y) $.
\begin{lstlisting}
lemma nonsingular_zero : W.nonsingular 0 0 ↔ W.a₆ = 0 ∧ (W.a₃ ≠ 0 ∨ W.a₄ ≠ 0)
lemma nonsingular_zero_of_Δ_ne_zero (h : W.equation 0 0) (hΔ : W.Δ ≠ 0) :
  W.nonsingular 0 0
lemma nonsingular_iff_variable_change (x y : F) :
  W.nonsingular x y ↔ (W.variable_change 1 x 0 y).nonsingular 0 0
lemma nonsingular_of_Δ_ne_zero {x y : F} (h : W.equation x y) (hΔ : W.Δ ≠ 0) :
  W.nonsingular x y
\end{lstlisting}

In fact, it is also true that $ \Delta_W \ne 0 $ if every point over the algebraic closure is nonsingular \cite[Proposition III.1.4]{silverman}, but the proof is more difficult and has yet to be formalised.

\pagebreak

\subsection{Point addition}

The set of points on an elliptic curve can be endowed with an \textbf{addition law} defined by a geometric secant-and-tangent process enabled by Vieta's formulae. \footnote{if a cubic polynomial has two roots in a field, then its third root is also in the field} This can be easily described in the Weierstrass model, where a point on $ W $ is either of the form $ (x, y) $ in the affine chart $ Z \ne 0 $ and satisfies the Weierstrass equation, or is the unique point at infinity $ \mathcal{O}_W := [0 : 1 : 0] $ when $ Z = 0 $. If $ S \in W $ is a singular point, the same geometric process will yield $ P + S = S = S + P $ for any other point $ P \in W $, so it necessitates considering only nonsingular points on $ W $ to obtain a group \cite[Section III.2]{silverman}. Note that if $ W $ is an elliptic curve, then all points are nonsingular by \texttt{nonsingular\_of\_\ensuremath{\Delta}\_ne\_zero}.
\begin{lstlisting}
inductive point
  | zero
  | some {x y : F} (h : W.nonsingular x y)

namespace point
\end{lstlisting}
Here, \texttt{zero} refers to $ \mathcal{O}_W $ and \texttt{some} refers to an affine point on $ W $. Note that a proof that an affine point $ (x, y) \in W $ is nonsingular already subsumes the data of the coordinates $ (x, y) \in F^2 $ in its type, so such a point is constructed by giving only the proof.

\begin{remark}
The set of points defined here will later be shown to form an abelian group under this addition law, but the presence of division means that the base ring needs to be a field. Over a general commutative ring $ R $ these should be replaced with scheme-theoretic points $ \mathrm{Spec}(R) \to E $, and the elliptic curve acquires the structure of a group scheme.
\end{remark}

In this model, the identity element in the group of points is defined to be $ \mathcal{O}_W \in W $.
\begin{lstlisting}
instance : has_zero W.point := ⟨zero⟩
\end{lstlisting}
Here, the \texttt{instance} of \texttt{has\_zero} allows the notation \texttt{0} instead of \texttt{zero}.

Given a nonsingular point $ P \in W $, its negation $ -P $ is defined to be the unique third point of intersection between $ W $ and the line through $ \mathcal{O}_W $ and $ P $, which is vertical when drawn on the affine plane. Explicitly, if $ P := (x, y) $, then $ -P := (x, \sigma_x(y)) $, where
\[ \sigma_X(Y) := -Y - (a_1X + a_3) \]
is the \textbf{negation polynomial}, which gives a very useful involution of the affine plane.
\begin{lstlisting}
def neg_polynomial : F[X][Y] := -Y - C (C W.a₁*X + C W.a₃)
def neg_Y (x y : F) : F := (W.neg_polynomial.eval (C y)).eval x
\end{lstlisting}
Here, \texttt{neg\_Y} is defined in terms of \texttt{neg\_polynomial} for clarity, but its actual definition is written out as \texttt{-y - C (C W.a\ensuremath{_1}*x + C W.a\ensuremath{_3})}. This is merely to avoid requiring the \texttt{noncomputable} tag, since polynomial operations are currently \texttt{noncomputable} in \texttt{mathlib}.

To define negation, it remains to prove that the negation of a nonsingular point on $ W $ is again a nonsingular point on $ W $, but this is straightforward.

\begin{lemma}
\label{lem:nonsingular_neg}
If $ (x, y) \in W $ is nonsingular, then $ (x, \sigma_x(y)) \in W $ is nonsingular.
\end{lemma}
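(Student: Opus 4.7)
The plan is to verify both conditions in the definition of nonsingularity at $(x, \sigma_x(y))$ by direct polynomial computation. Writing $c := a_1 x + a_3$ so that $\sigma_x(y) = -y - c$, the identity $\sigma_x(y)^2 + c\,\sigma_x(y) = (y+c)^2 - c(y+c) = y^2 + cy$ gives
\[ W(x, \sigma_x(y)) = y^2 + (a_1 x + a_3)y - (x^3 + a_2 x^2 + a_4 x + a_6) = W(x, y) = 0, \]
so the negated point satisfies the Weierstrass equation whenever $(x, y)$ does. In Lean this should reduce to a single \texttt{ring} call once \texttt{equation}, \texttt{neg\_Y}, and the evaluations at \texttt{C y} and \texttt{x} have been unfolded.

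For the partial derivatives I would first establish the two algebraic identities
\[ W_Y(x, \sigma_x(y)) = -W_Y(x, y), \qquad W_X(x, \sigma_x(y)) = W_X(x, y) - a_1 W_Y(x, y), \]
each of which is an elementary substitution and should again close by \texttt{ring}. Given the disjunction $W_X(x, y) \ne 0 \vee W_Y(x, y) \ne 0$ supplied by the nonsingularity hypothesis, the argument then splits cleanly: if $W_Y(x, y) \ne 0$ then $W_Y(x, \sigma_x(y)) = -W_Y(x, y) \ne 0$ and I produce the right disjunct; otherwise $W_Y(x, y) = 0$ while $W_X(x, y) \ne 0$, so $W_X(x, \sigma_x(y)) = W_X(x, y) \ne 0$ and I produce the left disjunct. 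Note that this argument remains valid in characteristic two, since it never divides by $2$ and the identity $W_Y(x,\sigma_x(y)) = -W_Y(x,y)$ preserves nonvanishing regardless of sign.

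The lemma is essentially computational, so the only real obstacle is bookkeeping: the bivariate polynomials \texttt{polynomial}, \texttt{polynomial\_X}, \texttt{polynomial\_Y}, and \texttt{neg\_polynomial} live in the iterated ring \texttt{F[X][Y]}, and each evaluation must be pushed through two layers of \texttt{C} and \texttt{eval} before the \texttt{ring} tactic can see a plain polynomial identity in $x$, $y$, and the coefficients. It is worth proving, once and for all, a small collection of \texttt{simp} lemmas rewriting \texttt{(W.polynomial.eval (C y)).eval x} (and its \texttt{polynomial\_X}, \texttt{polynomial\_Y}, and \texttt{neg\_polynomial} variants) into their expected closed-form expressions; after that the whole lemma collapses to one case split followed by two or three \texttt{ring} calls.
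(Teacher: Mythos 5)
Your proof is correct and follows essentially the same route as the paper: verify $W(x,\sigma_x(y)) = W(x,y)$ by direct substitution, then transfer the nonvanishing of a partial derivative using the elementary identities relating $W_X, W_Y$ at $(x,y)$ and $(x,\sigma_x(y))$ (the paper phrases the derivative step as deducing $y = \sigma_x(y)$ from $W_Y(x,\sigma_x(y)) = 0$, which is just the contrapositive packaging of your case split). Both arguments are characteristic-independent, and your identities check out.
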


\begin{proof}
Since $ (x, y) \in W $, verifying that $ W(x, y) = W(x, \sigma_x(y)) $ gives $ (x, \sigma_x(y)) \in W $ as well. Now assume that $ W_Y(x, \sigma_x(y)) = 0 $. It can be checked that $ y = \sigma_x(y) $, so $ W_Y(x, y) = 0 $ as well. Since $ (x, y) $ is nonsingular, $ W_X(x, y) \ne 0 $, so $ W_X(x, \sigma_x(y)) \ne 0 $ as well.
\end{proof}

\pagebreak

\cref{lem:nonsingular_neg} is \texttt{nonsingular\_neg}, which maps a proof that $ (x, y) \in W $ is nonsingular to a proof that $ -(x, y) \in W $ is nonsingular. This leads to the definition of negation.
\begin{lstlisting}
def neg : W.point → W.point
  | 0        := 0
  | (some h) := some (nonsingular_neg h)

instance : has_neg W.point := ⟨neg⟩
\end{lstlisting}
Here, the \texttt{instance} of \texttt{has\_neg} allows the notation \texttt{-P} instead of \texttt{neg P}.

Given two nonsingular points $ P_1, P_2 \in W $, their sum $ P_1 + P_2 $ is defined to be the negation of the unique third point of intersection between $ W $ and the line through $ P_1 $ and $ P_2 $, which again exists by B\'ezout's theorem. Explicitly, let $ P_1 := (x_1, y_1) $ and $ P_2 := (x_2, y_2) $.
\begin{itemize}
\item If $ x_1 = x_2 $ and $ y_1 = \sigma_{x_2}(y_2) $, then this line is vertical and $ P_1 + P_2 := \mathcal{O}_W $.
\item If $ x_1 = x_2 $ and $ y_1 \ne \sigma_{x_2}(y_2) $, then this line is the tangent of $ W $ at $ P_1 = P_2 $, and has slope
\[ \ell := \dfrac{-W_X(x_1, y_1)}{W_Y(x_1, y_1)}. \]
\item Otherwise $ x_1 \ne x_2 $, then this line is the secant of $ W $ through $ P_1 $ and $ P_2 $, and has slope
\[ \ell := \dfrac{y_1 - y_2}{x_1 - x_2}. \]
\end{itemize}
In the latter two cases, the \textbf{line polynomial}
\[ \lambda(X) = \lambda_{x_1, y_1, \ell}(X) := \ell(X - x_1) + y_1. \]
can be shown to satisfy $ \lambda(x_1) = y_1 $ and $ \lambda(x_2) = y_2 $, so that $ x_1 $ and $ x_2 $ are two roots of the \textbf{addition polynomial} $ W(X, \lambda(X)) $, obtained by evaluating $ W(X, Y) $ at $ \lambda(X) $, where $ W(X, Y) $ is viewed as a polynomial over $ F[X] $. In an attempt to reduce code duplication for the different cases, these accept an additional parameter \texttt{L} for the slope $ \ell $.
\begin{lstlisting}
def line_polynomial (x y L : F) : F[X] := C L * (X - C x) + C y
def add_polynomial (x y L : F) : F[X] := W.polynomial.eval (line_polynomial x y L)
\end{lstlisting}

The $ X $-coordinate of $ P_1 + P_2 $ is then the third root of $ W(X, \lambda(X)) $, so that
\begin{equation}
\label{eq:1}
W(X, \lambda(X)) = -(X - x_1)(X - x_2)(X - x_3).
\end{equation}
By inspecting the $ X^2 $ terms of $ W(X, \lambda(X)) $, this third root can be shown to be
\[ x_3 := \ell^2 + a_1\ell - a_2 - x_1 - x_2, \]
so the $ Y $-coordinate of $ -(P_1 + P_2) $ is
\[ y_3' := \lambda(x_3), \]
and that of $ P_1 + P_2 $ is
\[ y_3 := \sigma_{x_3}(y_3'). \]
These correspond to the coordinate functions \texttt{add\_X}, \texttt{add\_Y'}, and \texttt{add\_Y} respectively.
\begin{lstlisting}
def add_X (x₁ x₂ L : F) : F := L^2 + W.a₁*L - W.a₂ - x₁ - x₂
def add_Y' (x₁ x₂ y₁ L : F) : F := (line_polynomial x₁ y₁ L).eval (W.add_X x₁ x₂ L)
def add_Y (x₁ x₂ y₁ L : F) : F := W.neg_Y (W.add_X x₁ x₂ L) (W.add_Y' x₁ x₂ y₁ L)
\end{lstlisting}
Here, \texttt{add\_Y'} is defined in terms of \texttt{line\_polynomial} and \texttt{add\_X}, but in actuality it is again written out in the evaluated form \texttt{C L*(X - C x\ensuremath{_1}) + C y\ensuremath{_1}} to avoid the \texttt{noncomputable} tag.

\pagebreak

The slope itself is defined as a conditional over the three cases, and since two of them involve division, this is the first occasion where $ W $ needs to be defined over a field $ F $.
\begin{lstlisting}
variables {F : Type} [field F] (W : weierstrass_curve F)

def slope (x₁ x₂ y₁ y₂ : F) : F :=
  if hx : x₁ = x₂ then
    if hy : y₁ = W.neg_Y x₂ y₂ then 0
    else (3*x₁^2 + 2*W.a₂*x₁ + W.a₄ - W.a₁*y₁)/(y₁ - W.neg_Y x₁ y₁)
  else (y₁ - y₂)/(x₁ - x₂)
\end{lstlisting}
Note that \texttt{slope} returns the \emph{junk value} of $ 0 \in F $ when the slope is vertical. This practice of assigning a junk value is common in \texttt{mathlib} to avoid excessive layers of \texttt{option}, and any useful result proven using such a definition would include a condition so that this junk value will never be reached. In the case of \texttt{slope}, this is the implication $ x_1 = x_2 \to y_1 \ne \sigma_{x_2}(y_2) $, which holds precisely either when $ x_1 \ne x_2 $, or when $ x_1 = x_2 $ but $ (x_1, y_1) \ne -(x_2, y_2) $.
\begin{lstlisting}
variables {x₁ x₂ y₁ y₂ : F} (hxy : x₁ = x₂ → y₁ ≠ W.neg_Y x₂ y₂)

example (hx : x₁ ≠ x₂) : x₁ = x₂ → y₁ ≠ W.neg_Y x₂ y₂ := λ h, (hx h).elim
example (hy : y₁ ≠ W.neg_Y x₂ y₂) : x₁ = x₂ → y₁ ≠ W.neg_Y x₂ y₂ := λ _, hy
\end{lstlisting}
Here, the examples return proofs of \texttt{hxy} assuming $ x_1 \ne x_2 $ and $ y_1 \ne \sigma_{x_2}(y_2) $ respectively. They are illustrated here for clarity, but they do not exist in the actual Lean code since their term mode proofs are short enough to be inserted directly whenever necessary.

To define addition, it remains to prove that the addition of two nonsingular points on $ W $ is again a nonsingular point on $ W $. This is slightly lengthy but purely conceptual.

\begin{lemma}
\label{lem:nonsingular_add}
If $ (x_1, y_1), (x_2, y_2) \in W $ are nonsingular, then $ (x_3, y_3) \in W $ is nonsingular.
\end{lemma}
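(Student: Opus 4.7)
The plan is to reduce the task to proving that $(x_3, y_3') \in W$ is nonsingular, using \cref{lem:nonsingular_neg}: since $y_3 = \sigma_{x_3}(y_3')$ by definition, nonsingularity at $(x_3, y_3)$ follows from nonsingularity at $(x_3, y_3')$. Both the equation and the partial-derivative conditions for $(x_3, y_3')$ will be extracted from the factorisation \eqref{eq:1}.

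First, evaluating both sides of \eqref{eq:1} at $X = x_3$ makes the right-hand side vanish, so $W(x_3, \lambda(x_3)) = W(x_3, y_3') = 0$, giving $(x_3, y_3') \in W$.

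Next, suppose for contradiction that $W_X(x_3, y_3') = W_Y(x_3, y_3') = 0$. Differentiating \eqref{eq:1} with respect to $X$, the chain rule together with $\lambda'(X) = \ell$ yields
\[ W_X(X, \lambda(X)) + \ell \cdot W_Y(X, \lambda(X)) = -(X - x_2)(X - x_3) - (X - x_1)(X - x_3) - (X - x_1)(X - x_2). \]
Substituting $X = x_3$ collapses the left-hand side to $0$ and the right-hand side to $-(x_3 - x_1)(x_3 - x_2)$, forcing $x_3 = x_i$ for some $i \in \{1, 2\}$. The defining expression of $\ell$ in the secant case directly gives $\lambda(x_1) = y_1$ and $\lambda(x_2) = y_2$; in the doubling case $x_1 = x_2$, the hypothesis \texttt{hxy} combined with the involutivity of $\sigma_{x_1}$ forces $y_1 = y_2$, so again $\lambda(x_i) = y_i$. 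Hence $(x_3, y_3') = (x_i, y_i)$, which is nonsingular by hypothesis, contradicting the singularity assumption. Applying \cref{lem:nonsingular_neg} to $(x_3, y_3')$ then produces the desired nonsingularity of $(x_3, \sigma_{x_3}(y_3')) = (x_3, y_3)$.

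The principal obstacle is the formal derivation of \eqref{eq:1}. One must establish that $W(X, \lambda(X))$ is a cubic in $X$ with leading coefficient $-1$ whose roots (with multiplicity) are $x_1, x_2, x_3$, where the third is determined via Vieta's relation with the $X^2$-coefficient $\ell^2 + a_1\ell - a_2$. This reduces to a polynomial identity provable by \texttt{ring} modulo $(x_1, y_1), (x_2, y_2) \in W$, but the secant and doubling cases require case-splits on the definition of $\ell$. Once \eqref{eq:1} is in place, both the evaluation at $x_3$ and the derivative identity are short polynomial manipulations, so the remainder of the argument is genuinely conceptual and computation-light, as the paper advertises.
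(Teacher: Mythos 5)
Your proposal is correct and follows essentially the same route as the paper: reduce to $(x_3, y_3')$ via \cref{lem:nonsingular_neg}, read off membership and the derivative identity from \eqref{eq:1}, and dispose of the exceptional case $x_3 \in \{x_1, x_2\}$ by identifying the point with one of the given nonsingular points (the paper phrases this slightly more loosely as $(x_3,\lambda(x_3)) \in \{\pm(x_1,y_1),\pm(x_2,y_2)\}$, whereas you pin it down to $(x_i,y_i)$ using $\lambda(x_i)=y_i$, which also silently uses that both points satisfy the Weierstrass equation to force $y_1=y_2$ in the doubling case). Your closing remarks on establishing \eqref{eq:1} match the actual formalisation burden.
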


\begin{proof}
By \texttt{nonsingular\_neg}, it suffices to prove that $ (x_3, \lambda(x_3)) = (x_3, y_3') $ is nonsingular, since $ (x_3, \lambda(x_3)) \in W $ is clear. Taking derivatives of both sides in (\ref{eq:1}) yields
\[ W_X(X, \lambda(X)) + \ell \cdot W_Y(X, \lambda(X)) = -((X - x_1)(X - x_2) + (X - x_1)(X - x_3) + (X - x_2)(X - x_3)), \]
which does not vanish at $ X = x_3 $, so that $ W(X, \lambda(X)) $ has at least one nonvanishing partial derivative, unless possibly when $ x_3 \in \{x_1, x_2\} $. The latter implies that $ (x_3, \lambda(x_3)) \in \{\pm(x_1, y_1), \pm(x_2, y_2)\} $, but these are nonsingular by assumption or by \texttt{nonsingular\_neg}.
\end{proof}

\cref{lem:nonsingular_add} is \texttt{nonsingular\_add}, which accepts a proof that $ (x_1, y_1) \in W $ is nonsingular, a proof that $ (x_2, y_2) \in W $ is nonsingular, and a proof of \texttt{hxy}, and returns a proof that $ (x_1, y_1) + (x_2, y_2) \in W $ is nonsingular. This finally leads to the definition of addition.
\begin{lstlisting}
def add : W.point → W.point → W.point
  | 0                      P                     := P
  | P                      0                     := P
  | (@some _ _ _ x₁ y₁ h₁) (@some _ _ _ x₂ y₂ h₂) :=
    if hx : x₁ = x₂ then
      if hy : y₁ = W.neg_Y x₂ y₂ then 0
      else some (nonsingular_add h₁ h₂ (λ _, hy))
    else some (nonsingular_add h₁ h₂ (λ h, (hx h).elim))

instance : has_add W.point := ⟨add⟩
\end{lstlisting}
Here, the \texttt{instance} of \texttt{has\_add} allows the notation \texttt{P\ensuremath{_1} + P\ensuremath{_2}} instead of \texttt{add P\ensuremath{_1} P\ensuremath{_2}}. The annotation \texttt{@} for \texttt{some} simply gives access to all implicit variables in its definition, particularly $ x_1, x_2, y_1, y_2 \in F $ that is necessary to even state the conditions \texttt{hx} and \texttt{hy}.

\pagebreak

\section{Group law}
\label{sec:group}

Let $ W $ be a Weierstrass curve over a field $ F $, and denote its set of nonsingular points by $ W(F) $. The addition law defined in the previous section is in fact a \textbf{group law}.

\begin{proposition}
\label{prop:group}
$ W(F) $ forms an abelian group under this addition law.
\end{proposition}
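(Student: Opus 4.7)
The plan is to verify the four abelian-group axioms in turn. The identity $0 + P = P = P + 0$ is immediate from the first two pattern-matching branches of \texttt{add}. For inverses, if $P = \texttt{some}\ h$ has coordinates $(x,y)$, then $-P$ has coordinates $(x, \sigma_x(y))$, so the inner conditional in \texttt{add} enters the branch $x_1 = x_2$ and $y_1 = \sigma_{x_2}(y_2)$ and returns $0$. Commutativity reduces to verifying that the formula for $\ell$ is symmetric in $(P_1, P_2)$ in both the tangent and secant cases, and that $\lambda(x_3)$ does not depend on which endpoint of the line one started from, thanks to $\lambda(x_1) = y_1$ and $\lambda(x_2) = y_2$, so that the formulas $x_3 = \ell^2 + a_1 \ell - a_2 - x_1 - x_2$ and $y_3 = \sigma_{x_3}(\lambda(x_3))$ end up symmetric.

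Associativity is the substantive content, and my plan is to avoid any brute-force polynomial identity by transporting the addition on $W(F)$ into a genuine abelian group via the ideal class group. Let $R := F[X,Y]/(W(X,Y))$ be the affine coordinate ring; nonsingularity of every point of $W$ makes $R$ a Dedekind domain, so its class group $\mathrm{Cl}(R)$ is honestly a group. Define $\varphi : W(F) \to \mathrm{Cl}(R)$ by $\varphi(\mathcal{O}_W) := 0$ and $\varphi(x, y) := [I_{(x,y)}]$ with $I_{(x,y)} := (X - x,\, Y - y)$. Associativity on $W(F)$ will then follow from associativity in $\mathrm{Cl}(R)$, provided $\varphi$ is (i)~a magma homomorphism from $(W(F), +)$ into $\mathrm{Cl}(R)$, and (ii)~injective.

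The homomorphism property rests on two principal-ideal identities. The vertical line through $P$ and $-P$ yields $(X - x) R = I_P \cdot I_{-P}$, which translates to $\varphi(-P) = -\varphi(P)$. For $P_1 + P_2$ computed via a line of slope $\ell$, the factorisation (\ref{eq:1}) of $W(X, \lambda(X))$ over $F[X]$ lifts to the ideal identity $(Y - \lambda(X)) R = I_{P_1} \cdot I_{P_2} \cdot I_{-(P_1 + P_2)}$, and combining this with the negation identity yields $\varphi(P_1) + \varphi(P_2) = \varphi(P_1 + P_2)$. Both these ideal identities, as well as the injectivity of $\varphi$ on nonsingular affine points, are verified in Lean using \texttt{algebra.norm} applied to generators of $I_{(x,y)}$ when $R$ is viewed as a free $F[X]$-module of rank two with basis $\{1, Y\}$; the norm of $I_{(x,y)}$ recovers the $x$-coordinate up to units, and the second generator separates distinct points over the same $x$.

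The main obstacle will be the case analysis in the homomorphism property. The tangent case $P_1 = P_2$ requires showing that $I_{P_1}$ appears with multiplicity two in the factorisation of $(Y - \lambda(X)) R$, which is a local statement at the double root not directly visible from the symbolic factorisation (\ref{eq:1}); and the configurations where the slope would be vertical must be split off and routed through the negation identity. Threading these case splits through the ideal-theoretic bookkeeping supported by \texttt{algebra.norm} and \texttt{class\_group} is where the bulk of the implementation effort will lie, which is consistent with the emphasis placed on these libraries in the introduction.
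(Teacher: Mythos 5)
Your strategy is the paper's own: transport the addition law into the ideal class group of the coordinate ring $F[W]$ via $P \mapsto [\langle X - x, Y - y \rangle]$, show this map is an injective homomorphism using the vertical-line and chord-tangent ideal identities together with the norm of $F[W]$ as a free $F[X]$-module with basis $\{1, Y\}$, and pull commutativity and associativity back from $\mathrm{Cl}(F[W])$.

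Two steps need repair. First, you cannot get off the ground by declaring $R$ Dedekind: the proposition is stated for an arbitrary Weierstrass curve, whose full point set may contain singular points ($W(F)$ is only the nonsingular locus), so $R$ need not be integrally closed; and even for an elliptic curve, formalising ``smooth implies Dedekind'' is a separate substantial theorem that the paper deliberately avoids. The class group is a group for any integral domain once defined as invertible fractional ideals modulo principal ones; what you actually owe is that each individual $\langle X - x, Y - y \rangle$ with $(x,y)$ nonsingular is \emph{invertible}. Your own vertical-line identity $I_P \cdot I_{-P} = \langle X - x \rangle$ supplies the inverse, but proving it is exactly where nonsingularity of that point enters: the paper reduces it to showing an auxiliary ideal containing $X - x$, $Y - y$, $Y - \sigma_x(y)$ and a quadratic is the unit ideal, which holds because $W_X(x,y)$ or $W_Y(x,y)$ is nonzero. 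Second, your injectivity sketch (``the norm recovers the $x$-coordinate, the second generator separates points'') does not yet rule out $\langle X - x, Y - y \rangle$ being principal; the paper's argument is that a generator $f$ would force $\deg_X \mathrm{Nm}(f) = \dim_F(F[W]/\langle X - x, Y - y\rangle) = 1$, whereas $\deg_X \mathrm{Nm}(p + qY) = \max(2\deg_X p,\, 2\deg_X q + 3)$ is never $1$. Finally, the tangent case needs no local multiplicity analysis: the paper treats secant and tangent uniformly under the hypothesis $x_1 = x_2 \Rightarrow y_1 \ne \sigma_{x_2}(y_2)$ by rewriting $\langle X - x_i, Y - y_i \rangle = \langle X - x_i, Y - \lambda(X) \rangle$ and reducing to the unit-ideal statement above, the doubled-point configuration being dispatched by an explicit polynomial identity witnessing $y_1 \ne \sigma_{x_1}(y_1)$.
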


The axioms of this group law are mostly straightforward, typically just by examining the definition for each of the five cases. For instance, the \texttt{lemma add\_left\_neg} that says $ -P + P = \mathcal{O}_W $ is immediate, since $ -\mathcal{O}_W + \mathcal{O}_W = \mathcal{O}_W $ by definition, and $ -(x, y) + (x, y) = (x, \sigma_x(y)) + (x, y) = \mathcal{O}_W $ for any $ (x, y) \in W(F) $ by the first case of affine addition.

On the other hand, associativity is far from being straightforward. \footnote{see \cref{sec:experiment} for alternative proofs} A notable algebro-geometric proof involves canonically identifying $ W(F) $ with its \emph{Picard group}, a natural geometric construction associated to $ W $ with a known group structure, and proving that this identification respects the addition law on $ W $ \cite[Proposition III.3.4]{silverman}. This is generally regarded as the most conceptual proof, as it explains the seemingly arbitrary secant-and-tangent process, and more crucially because it works for any $ \mathrm{char}(F) $.

The proof in this paper is an analogue of this proof, but the arguments involved are purely algebraic without the need for any geometric machinery, in contrast to the typical algebro-geometric proof. The main idea, originally inspired by Buzzard's post on Zulip \cite{buzzard} and Chapman's answer on MathOverflow \cite{chapman}, is to construct an explicit function \texttt{to\_class} from $ W(F) $ to the \emph{ideal class group} $ \mathrm{Cl}(R) $ of an integral domain $ R $ associated to $ W $, then to prove that this function is injective and respects the addition law on $ W $. This is a construction in commutative algebra whose definition will now be briefly outlined, but for a more comprehensive introduction to ideal class groups motivated by arithmetic examples, and especially specific details of their formalisation in \texttt{mathlib}, the reader is strongly urged to consult the original paper by Baanen, Dahmen, Narayanan, and Nuccio \cite[Section 2]{baanen}.

\subsection{Ideal class group of the coordinate ring}

Given an integral domain $ R $ with a fraction field $ K $, a \textbf{fractional ideal} of $ R $ is simply a $ R $-submodule $ I $ of $ K $ such that $ x \cdot I \subseteq R $ for some nonzero $ x \in R $. This generalises the notion of an ideal of $ R $, since any ideal is clearly a fractional ideal with $ x = 1 $, so ideals are sometimes referred to as \textbf{integral ideals} to distinguish them from fractional ideals.

In \texttt{mathlib}, this is expressed as a transitive coercion from \texttt{ideal} to \texttt{fractional\_ideal}.
\begin{lstlisting}
instance : has_coe_t (ideal R) (fractional_ideal R⁰ (fraction_ring R))
\end{lstlisting}
Here, $ R^0 $ is the submonoid of non-zero-divisors of $ R $, and \texttt{fraction\_ring} returns the canonical choice of a fraction field of $ R $ obtained by adjoining inverses of elements of $ R^0 $. Since $ R $ is an integral domain in this case, all nonzero elements of $ R $ become invertible in its fraction field.

Analogously to integral ideals, the set of fractional ideals of $ R $ forms a commutative semiring under the usual operations of addition and multiplication for submodules. A fractional ideal $ I $ of $ R $ is \textbf{invertible} if $ I \cdot J = R $ for some fractional ideal $ J $ of $ R $, and the subset of invertible fractional ideals of $ R $ forms an abelian group under multiplication. An important class of invertible fractional ideals are those generated by a nonzero element of $ K $, called \textbf{principal fractional ideals}. The \textbf{ideal class group} $ \mathrm{Cl}(R) $ of $ R $ is then defined to be the quotient group of invertible fractional ideals by principal fractional ideals.

\pagebreak

In \texttt{mathlib}, a \texttt{class\_group} element is constructed from an invertible \texttt{fractional\_ideal} via \texttt{class\_group.mk}, and this association is a \texttt{monoid\_hom} that respects multiplication.
\begin{lstlisting}
def class_group.mk : (fractional_ideal R⁰ K)ˣ →* class_group R := ...
\end{lstlisting}
Here, it is worth noting that $ \mathrm{Cl}(R) $ is typically defined only when $ R $ is \textbf{Dedekind}, namely when every nonzero fractional ideal is invertible, and in such domains, \texttt{class\_group.mk0} constructs a \texttt{class\_group} element directly from a nonzero \texttt{fractional\_ideal}.

The integral domain in consideration is the \textbf{coordinate ring} of $ W $, that is
\[ F[W] := F[X, Y] / \langle W(X, Y) \rangle, \]
whose fraction field is the \textbf{function field} $ F(W) := \mathrm{Frac}(F[W]) $ of $ W $.
\begin{lstlisting}
@[derive comm_ring] def coordinate_ring : Type := adjoin_root W.polynomial
abbreviation function_field : Type := fraction_ring W.coordinate_ring

namespace coordinate_ring
\end{lstlisting}
Here, $ W(X, Y) $ is viewed as a quadratic monic polynomial with coefficients in $ F[X] $, so \texttt{adjoin\_root} constructs the quotient ring $ F[W] $ by adjoining its root $ Y $. The tag \texttt{derive comm\_ring} automatically generates a \texttt{instance} of \texttt{comm\_ring} present in \texttt{adjoin\_root}, \footnote{this has a type unification performance issue that will be detailed in \cref{sec:implementation}} while \texttt{abbreviation} is just a \texttt{def} that inherits every \texttt{instance} from \texttt{fraction\_ring}.

A priori, $ F[W] $ is only a commutative ring, but for $ \mathrm{Cl}(F[W]) $ to make sense it needs to be at least an integral domain, which is straightforward from the shape of $ W(X, Y) $.

\begin{lemma}
\label{lem:is_domain}
$ F[W] $ is an integral domain.
\end{lemma}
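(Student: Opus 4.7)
The plan is to show that $W(X,Y)$ is a prime element of $F[X][Y]$, so that the quotient $F[W] = F[X][Y]/\langle W(X,Y) \rangle$ is automatically an integral domain. Since $F$ is a field, $F[X]$ is a PID and hence a UFD, so by Gauss's lemma $F[X][Y]$ is also a UFD, in which irreducibility coincides with primality. Thus it suffices to prove that $W(X,Y)$ is irreducible in $F[X][Y]$.

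View $W(X,Y) = Y^2 + (a_1X + a_3)Y - (X^3 + a_2X^2 + a_4X + a_6)$ as a monic quadratic polynomial in $Y$ over the integral domain $F[X]$. The units of $F[X][Y]$ are precisely the nonzero elements of $F$, so any nontrivial factorisation $W = PQ$ forces both $P$ and $Q$ to have $Y$-degree exactly one, and after rescaling their leading $Y$-coefficients (whose product is $1$) we may take both factors to be $Y$-monic: $P = Y - f(X)$ and $Q = Y - g(X)$ for some $f, g \in F[X]$. Expanding yields
\[ f + g = -(a_1X + a_3), \qquad fg = -(X^3 + a_2X^2 + a_4X + a_6). \]
Since $F[X]$ is a domain, $\deg_X(fg) = \deg_X f + \deg_X g = 3$, so the unordered pair of $X$-degrees is either $\{0,3\}$ or $\{1,2\}$; in either case the two degrees differ, so no cancellation occurs in the leading term of $f + g$, giving $\deg_X(f+g) \ge 2$. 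This contradicts $\deg_X(-(a_1X + a_3)) \le 1$, completing the irreducibility argument.

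In Lean, the remaining task is plumbing: identifying \texttt{adjoin\_root W.polynomial} with $F[X][Y]/\langle W(X,Y) \rangle$, invoking the UFD structure on \texttt{polynomial (polynomial F)} via \texttt{polynomial.unique\_factorization\_monoid} applied twice, translating between \texttt{irreducible} and \texttt{prime} through \texttt{unique\_factorization\_monoid.irreducible\_iff\_prime}, and concluding with \texttt{ideal.quotient.is\_domain} for the prime ideal generated by $W(X,Y)$. The main practical obstacle is not the mathematics but managing the type-level distinction between the nested univariate representation \texttt{F[X][Y]} and the intended bivariate polynomial ring (so that the $X$-degree argument is phrased over the correct base), together with the associated type-class inference for the \texttt{comm\_ring} and \texttt{is\_domain} structures transported through \texttt{adjoin\_root}.
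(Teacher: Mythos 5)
Your proposal is correct and follows essentially the same route as the paper: reduce to primality of $W(X,Y)$ via the UFD structure of $F[X][Y]$, use $Y$-monicity to force any nontrivial factorisation into the form $(Y - p(X))(Y - q(X))$, and derive a contradiction by comparing $X$-degrees of $p + q$ and $pq$. Your write-up merely spells out the degree bookkeeping that the paper leaves implicit.
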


\begin{proof}
It suffices to prove that $ W(X, Y) $ is prime, but $ F[X, Y] $ is a unique factorisation domain since $ F $ is a field, so it suffices to prove that $ W(X, Y) $ is irreducible. Suppose for a contradiction that it were reducible as a product of two factors. Since it is a monic polynomial in $ Y $, the leading coefficients of the two factors multiply to $ 1 $, so without loss of generality
\[ W(X, Y) = (Y - p(X))(Y - q(X)), \]
for some polynomials $ p(X), q(X) \in F[X] $. Comparing coefficients yields
\[ a_1X + a_3 = -(p(X) + q(X)), \qquad -(X^3 + a_2X^2 + a_4X + a_6) = p(X)q(X), \]
which cannot simultaneously hold by considering $ \deg_X(p(X)) $ and $ \deg_X(q(X)) $.
\end{proof}

\cref{lem:is_domain} is formalised as an \texttt{instance} of \texttt{is\_domain} for \texttt{W.coordinate\_ring}. In fact, $ F[W] $ is also Dedekind when $ \Delta_W \ne 0 $, but this will not be necessary in the proof.

\begin{remark}
This argument with ideal class groups is essentially an algebraic translation of the algebro-geometric argument with Picard groups. An invertible fractional ideal on an integral domain $ R $ is equivalent to an invertible sheaf on its spectrum $ \mathrm{Spec}(R) $, so the Picard group $ \mathrm{Pic}(\mathrm{Spec}(F[W])) $ of invertible sheaves is precisely the ideal class group $ \mathrm{Cl}(F[W]) $ of invertible fractional ideals \cite[Example II.6.3.2]{hartshorne}. Note that an invertible $ R $-submodule of $ \mathrm{Frac}(R) $ is automatically a fractional ideal of $ R $ \cite[Theorem 11.6]{eisenbud}, so the ideal class group may also be defined purely in the language of invertible submodules.
\end{remark}

\pagebreak

\subsection{Construction of \texttt{to\_class}}

The function \texttt{to\_class} will map a nonsingular affine point $ (x, y) \in W(F) $ to the class of the invertible fractional ideal arising from the integral ideal $ \langle X - x, Y - y \rangle $. Defining the integral ideal explicitly is straightforward, and its associated fractional ideal is obtained by coercion.
\begin{lstlisting}
def XY_ideal (x : F) (y : F[X]) : ideal W.coordinate_ring :=
  ideal.span {adjoin_root.mk W.polynomial (C (X - C x)),
              adjoin_root.mk W.polynomial (Y - C y)}
\end{lstlisting}
Here, \texttt{ideal.span} constructs an integral ideal generated by the elements of a specified set, and \texttt{adjoin\_root.mk W.polynomial} is the canonical quotient map $ F[X, Y] \to F[W] $. Note also that \texttt{XY\_ideal} is defined slightly more generally than described, by allowing the second argument to be a polynomial in $ F[X] $ rather than just a constant.

On the other hand, checking that \texttt{XY\_ideal} is indeed invertible is slightly fiddly.

\begin{lemma}
\label{lem:XY_ideal_neg_mul}
For any $ (x, y) \in W(F) $,
\[ \langle X - x, Y - \sigma_x(y) \rangle \cdot \langle X - x, Y - y \rangle = \langle X - x \rangle. \]
\end{lemma}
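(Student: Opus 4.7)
The plan is to prove both inclusions of the asserted equality by explicit ideal-theoretic computation in $F[W]$, working in the lift $F[X][Y]$ and using only the defining relation $W(X,Y)=0$ of the coordinate ring together with the hypothesis $W(x,y)=0$.

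Let $I := \langle X-x,\ Y-\sigma_x(y)\rangle$ and $J := \langle X-x,\ Y-y\rangle$. For the inclusion $IJ \subseteq \langle X-x\rangle$, three of the four generating products of $IJ$, namely $(X-x)^2$, $(X-x)(Y-y)$, and $(X-x)(Y-\sigma_x(y))$, lie manifestly in $\langle X-x\rangle$. The heart of the computation is therefore the identity
\[
(Y-\sigma_x(y))(Y-y) \;=\; (X-x)\cdot Q \qquad\text{in } F[W],
\]
for the explicit element $Q := -a_1 Y + X^2 + xX + x^2 + a_2(X+x) + a_4$. I would derive it by expanding the left-hand side in $F[X][Y]$ using $\sigma_x(y)=-y-(a_1 x + a_3)$, replacing $Y^2$ via the Weierstrass relation $Y^2 = -(a_1 X + a_3)Y + X^3 + a_2 X^2 + a_4 X + a_6$, and cancelling the resulting constant term against $W(x,y)=0$.

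The same identity also shows that $IJ = (X-x)\cdot K$, where $K := \langle X-x,\ Y-y,\ Y-\sigma_x(y),\ Q\rangle \subseteq F[W]$, so the reverse inclusion $\langle X-x\rangle \subseteq IJ$ reduces to showing $1\in K$. For this I would reduce each generator of $K$ modulo the ideal $(X-x, Y-y)$ corresponding to the affine point $(x,y)$, and observe that
\[
Y-\sigma_x(y) \;\equiv\; y - \sigma_x(y) \;=\; W_Y(x,y), \qquad Q \;\equiv\; -W_X(x,y),
\]
so that $K$ contains both $W_X(x,y)$ and $W_Y(x,y)$. By the nonsingularity of $(x,y)$ at least one of these scalars is a nonzero element of $F$, hence a unit in $F[W]$, which forces $1 \in K$ and completes the proof.

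The main obstacle I anticipate is the careful bookkeeping inside Lean for the key identity $(Y-\sigma_x(y))(Y-y) = (X-x)Q$, since it lives in the quotient $F[W] = F[X][Y]/\langle W(X,Y)\rangle$: it must be proved by lifting to $F[X][Y]$, exhibiting the difference as an explicit $F[X][Y]$-multiple of $W(X,Y)$, and then combining with the affine equation $W(x,y)=0$, all through the \texttt{adjoin\_root.mk} API before a \texttt{ring} call can close the computation. Once this identity is in place, the remaining ideal manipulations (membership in \texttt{ideal.span} and the two span equalities) should follow routinely.
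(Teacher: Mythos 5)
Your proposal is correct and follows essentially the same route as the paper's proof: the key identity $(Y-\sigma_x(y))(Y-y)=(X-x)Q$ with the same cofactor $Q = X^2+(x+a_2)X+(x^2+a_2x+a_4)-a_1Y$, the reduction to showing the auxiliary ideal $K=\langle X-x,\,Y-y,\,Y-\sigma_x(y),\,Q\rangle$ is the unit ideal, and the use of nonsingularity via the observations that $Y-\sigma_x(y)\equiv W_Y(x,y)$ and $Q\equiv -W_X(x,y)$ modulo $\langle X-x,Y-y\rangle$. The paper merely phrases the last step as two explicit polynomial identities rather than a reduction modulo the point's ideal, which is the same computation.
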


\begin{proof}
Since $ W(x, y) = 0 $, there is an identity in $ F[W] $ given by
\[ (Y - y)(Y - \sigma_x(y)) \equiv (X - x)(X^2 + (x + a_2)X + (x^2 + a_2x + a_4) - a_1Y), \]
so the required equality may be reduced to $ \langle X - x \rangle \cdot I = \langle X - x \rangle $ in $ F[X, Y] $, where
\[ I := \langle X - x, Y - y, Y - \sigma_x(y), X^2 + (x + a_2)X + (x^2 + a_2x + a_4) - a_1Y \rangle. \]
Since $ (x, y) $ is nonsingular, either $ W_X(x, y) \ne 0 $ or $ W_Y(x, y) \ne 0 $, but
\[ W_X(x, y) = - (X + 2x + a_2)(X - x) + a_1(Y - y) + (X^2 + (x + a_2)X + (x^2 + a_2x + a_4) - a_1Y), \]
and $ W_Y(x, y) = -(Y - y) + (Y - \sigma_x(y)) $, so $ I = F[X, Y] $ in both cases.
\end{proof}

\begin{remark}
Geometrically, \cref{lem:XY_ideal_neg_mul} says that the line $ X = x $ intersects $ W $ at $ \mathcal{O}_W $ and at precisely two affine points $ (x, y) $ and $ (x, \sigma_x(y)) $, counted with multiplicity if they are equal.
\end{remark}

\cref{lem:XY_ideal_neg_mul} is \texttt{XY\_ideal\_neg\_mul}, and it follows that the fractional ideal $ \langle X - x, Y - y \rangle $ has inverse $ \langle X - x, Y - \sigma_x(y) \rangle \cdot \langle X - x \rangle^{-1} $. This is formalised as \texttt{XY\_ideal'\_mul\_inv}, which maps a proof that $ (x, y) \in W $ is nonsingular to a proof that the fractional ideal $ \langle X - x, Y - y \rangle $ has the specified right inverse. Passing this proof to \texttt{units.mk\_of\_mul\_eq\_one} then constructs the invertible fractional ideal of $ F[W] $ associated to $ \langle X - x, Y - y \rangle $.
\begin{lstlisting}
def XY_ideal' (h : W.nonsingular x y) :
  (fractional_ideal W.coordinate_ring⁰ W.function_field)ˣ :=
  units.mk_of_mul_eq_one _ _ (XY_ideal'_mul_inv h)
\end{lstlisting}

Now \texttt{to\_class} will be a \texttt{add\_monoid\_hom}, namely a function bundled with proofs that it maps zero to zero and respects addition. Its underlying unbundled function $ W(F) \to \mathrm{Cl}(F[W]) $, appropriately named \texttt{to\_class\_fun}, is defined separately to allow the equation compiler to generate lemmas automatically used in the proof that \texttt{to\_class} respects addition.
\begin{lstlisting}
def to_class_fun : W.point → additive (class_group W.coordinate_ring)
  | 0        := 0
  | (some h) := additive.of_mul (class_group.mk (XY_ideal' h))
\end{lstlisting}
Here, \texttt{additive G} creates a type synonym of a multiplicative group \texttt{G}, and the multiplicative \texttt{group} instance on \texttt{G} is turned into an additive \texttt{add\_group} instance on \texttt{additive G}. This is necessary to bundle \texttt{to\_class} as an \texttt{add\_monoid\_hom}, since \texttt{mathlib} does not have homomorphisms between an additive group and a multiplicative group by design.

\pagebreak

Now \texttt{to\_class\_fun} maps zero to zero by construction, but proving that it respects addition requires checking the five cases for \texttt{add} separately. The first two cases are trivial and the third case follows from \texttt{XY\_ideal\_neg\_mul}, while the last two cases are handled simultaneously by assuming \texttt{hxy} and checking an identity of integral ideals of $ F[W] $.

\begin{lemma}
\label{lem:XY_ideal_mul_XY_ideal}
For any $ (x_1, y_1), (x_2, y_2) \in W(F) $, if $ x_1 = x_2 $ implies $ y_1 \ne \sigma_{x_2}(y_2) $, then
\[ \langle X - x_1, Y - y_1 \rangle \cdot \langle X - x_2, Y - y_2 \rangle \cdot \langle X - x_3 \rangle = \langle X - x_3, Y - y_3 \rangle \cdot \langle Y - \lambda(X) \rangle, \]
where $ (x_3, y_3) := (x_1, y_1) + (x_2, y_2) $.
\end{lemma}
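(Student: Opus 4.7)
The plan is to derive both inclusions from a single polynomial identity in $F[W]$. Completing the square in $Y$ gives $W(X,Y) - W(X,\lambda(X)) = (Y-\lambda(X))(Y-\mu(X))$ in $F[X,Y]$, where $\mu(X) := \sigma_X(\lambda(X)) = -\lambda(X) - a_1 X - a_3$. Combining this with $W(X,Y) = 0$ in $F[W]$ and equation~(\ref{eq:1}) yields the key identity
\[
(Y-\lambda(X))(Y-\mu(X)) = (X-x_1)(X-x_2)(X-x_3) \quad \text{in } F[W]. \qquad (\dagger)
\]
Furthermore $\mu(x_3) = \sigma_{x_3}(y_3') = y_3$, so $Y-\mu(X) \in I_3$, where I abbreviate $I_i := \langle X-x_i, Y-y_i \rangle$.

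For the inclusion $\mathrm{LHS} \subseteq \mathrm{RHS}$, I would examine the four products $(X-x_3)\cdot g$ with $g$ a generator of $I_1 I_2$. The case $g = (X-x_1)(X-x_2)$ is exactly $(\dagger)$, which lies in $I_3 \cdot \langle Y-\lambda(X) \rangle$. The other three cases each involve one or two factors $(Y-y_i)$; substituting $Y-y_i = (Y-\lambda(X)) + \ell(X-x_i)$ (valid because $\lambda(x_i)=y_i$) and expanding, every resulting term has either $(X-x_3)(Y-\lambda(X))$ or $(X-x_1)(X-x_2)(X-x_3)$ as a factor, so lies in $\mathrm{RHS}$.

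For $\mathrm{RHS} \subseteq \mathrm{LHS}$, the central technical claim is $Y-\lambda(X) \in I_1 I_2$. In the secant case $x_1 \ne x_2$ this is immediate from $(x_2-x_1)(Y-\lambda(X)) = (X-x_1)(Y-y_2) - (Y-y_1)(X-x_2)$. In the tangent case $x_1 = x_2$, hypothesis \texttt{hxy} together with $(x_2,y_2) \in W$ forces $y_1 = y_2$ and $W_Y(x_1,y_1) \ne 0$. Taylor-expanding $W$ at $(x_1,y_1)$ and using $W(X,Y)=0$ in $F[W]$ gives the identity
\[
(Y-y_1)\,W_Y(X,Y) = (X-x_1)\,q(X) + (Y-y_1)^2 \quad \text{in } F[W]
\]
for an explicit $q(X) \in F[X]$ with $q(x_1) = -W_X(x_1,y_1) = \ell\cdot W_Y(x_1,y_1)$. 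Rearranging, $W_Y(X,Y)(Y-\lambda(X)) \in I_1^2$. Since $I_1$ is maximal in $F[W]$ (residue field $F$) and $W_Y \notin I_1$, the image of $W_Y$ in the Artinian local ring $F[W]/I_1^2$ is a unit, forcing $Y-\lambda(X) \in I_1^2 = I_1 I_2$. Granted the claim, the generator $(X-x_3)(Y-\lambda(X))$ of $\mathrm{RHS}$ lies in $\mathrm{LHS}$ immediately, while the generator $(Y-y_3)(Y-\lambda(X))$ is handled by writing $Y-y_3 = (Y-\mu(X)) - (\ell+a_1)(X-x_3)$ (valid since $\mu(x_3)=y_3$) and reapplying $(\dagger)$ to $(Y-\mu(X))(Y-\lambda(X))$.

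The main obstacle I anticipate is the tangent case of $Y-\lambda(X) \in I_1^2$: the step of inverting $W_Y$ modulo $I_1^2$ is brief on paper but may require a more concrete witness in Lean, such as an explicit expression of $Y-\lambda(X)$ as an $F[W]$-combination of $(X-x_1)^2$, $(X-x_1)(Y-y_1)$, and $(Y-y_1)^2$. The remaining steps are routine algebraic bookkeeping, likely discharged by \texttt{ring} once the relevant polynomial identities are set up.
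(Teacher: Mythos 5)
Your proof is correct, and it shares the paper's two central ingredients --- the factorisation $(Y-\lambda(X))(Y-\sigma_X(\lambda(X))) \equiv -W(X,\lambda(X)) = (X-x_1)(X-x_2)(X-x_3)$ in $F[W]$ (your $(\dagger)$) and the substitutions $Y - y_i \equiv Y - \lambda(X) \bmod (X - x_i)$ --- but it organises the final reduction differently. The paper manipulates the whole equation at the level of ideals until it collapses to the single claim that $I = \langle X - x_1, X - x_2, Y - \sigma_X(Y)\rangle$ is the unit ideal, which in the tangent case is witnessed by one explicit identity exhibiting the unit $(y_1 - \sigma_{x_1}(y_1))^2 = W_Y(x_1,y_1)^2$ as an element of $I$. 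You instead prove the two inclusions separately, generator by generator, and isolate the membership $Y - \lambda(X) \in I_1 I_2$ as the key claim, handling the tangent case by a Taylor expansion plus inversion of $W_Y$ in the local ring $F[W]/I_1^2$. Both tangent-case arguments ultimately rest on the same fact, namely $W_Y(x_1,y_1) \ne 0$, which follows from \texttt{hxy}; yours makes the geometry more transparent (it is visibly the assertion that the line $Y = \lambda(X)$ meets $W$ doubly at $(x_1,y_1)$), at the cost of an abstract unit-inversion step that, as you yourself anticipate, must be replaced by an explicit witness in a formal system --- and that witness is essentially what the paper's identity supplies. Your inclusion-by-inclusion verification also entails somewhat more case work than the paper's ideal-arithmetic reduction, though each individual case is routine.
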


\begin{proof}
In both valid cases of \texttt{hxy}, the line $ Y = \lambda(X) $ contains $ (x_1, y_1) $ and $ (x_2, y_2) $, so
\[ \langle X - x_1, Y - y_1 \rangle = \langle X - x_1, Y - \lambda(X) \rangle, \qquad \langle X - x_2, Y - y_2 \rangle = \langle X - x_2, Y - \lambda(X) \rangle. \]
Furthermore, by (\ref{eq:1}) and the identity $ W(X, \lambda(X)) \equiv (Y - \lambda(X))(\sigma_X(Y) - \lambda(X)) $ in $ F[W] $, the required equality is reduced to checking that $ I := \langle X - x_1, X - x_2, Y - \sigma_X(Y) \rangle $ satisfies
\[ I \cdot \langle X - x_3 \rangle + \langle \sigma_X(Y) - \lambda(X) \rangle = \langle X - x_3, Y - y_3 \rangle, \]
where $ Y - \lambda(X) $ has been replaced by $ Y - \sigma_X(Y) $ in $ I $ since $ \sigma_X(Y) - \lambda(X) $ is present as a summand in the left hand side. By construction, the line $ Y = \lambda(X) $ contains $ (x_3, \lambda(x_3)) $, so the negated line $ \sigma_X(Y) = \lambda(X) $ contains its negation $ (x_3, \sigma_{x_3}(\lambda(x_3))) = (x_3, y_3) $. Then
\[ \langle X - x_3, Y - y_3 \rangle = \langle X - x_3, \sigma_X(Y) - \lambda(X) \rangle, \]
so it suffices to check that $ I = F[W] $. Now $ x_1 - x_2 = -(X - x_1) + (X - x_2) $, so $ I = F[W] $ if $ x_1 \ne x_2 $. Otherwise $ y_1 \ne \sigma_{x_1}(y_1) $, then there are no common solutions to $ Y = \sigma_{x_1}(Y) $ and $ W(x_1, Y) = 0 $, so $ I = F[W] $ by the Nullstellensatz. Explicitly, this follows from the identity
\[ (y_1 - \sigma_{x_1}(y_1))^2 \equiv -(4X^2 + (4x_1 + b_2)X + (4x_1^2 + b_2x_1 + 2b_4))(X - x_1) + (Y - \sigma_X(Y))^2 \]
in $ F[W] $, since $ W(x_1, y_1) = 0 $.
\end{proof}

\begin{remark}
Geometrically, the line $ Y = \lambda(X) $ intersects $ W $ at precisely three affine points $ (x_1, y_1) $, $ (x_2, y_2) $, and $ (x_3, \sigma_{x_3}(y_3)) $, which translates to the identity of integral ideals
\begin{equation}
\label{eq:2}
\langle X - x_1, Y - y_1 \rangle \cdot \langle X - x_2, Y - y_2 \rangle \cdot \langle X - x_3, Y - \sigma_{x_3}(y_3) \rangle = \langle Y - \lambda(X) \rangle.
\end{equation}
The identity in \cref{lem:XY_ideal_mul_XY_ideal} is then deduced by multiplying (\ref{eq:2}) with the identity in \cref{lem:XY_ideal_neg_mul} and cancelling $ \langle X - x_3, Y - \sigma_{x_3}(y_3) \rangle $ from both sides. Note that \cref{lem:XY_ideal_mul_XY_ideal} does not need the affine points to be nonsingular, while directly proving (\ref{eq:2}) does.
\end{remark}

\cref{lem:XY_ideal_mul_XY_ideal} is \texttt{XY\_ideal\_mul\_XY\_ideal}, and under these hypotheses, it follows immediately that the invertible fractional ideals $ \langle X - x_1, Y - y_1 \rangle $ and $ \langle X - x_2, Y - y_2 \rangle $ multiply to $ \langle X - x_3, Y - y_3 \rangle $ as classes in $ \mathrm{Cl}(F[W]) $, which along with \texttt{XY\_ideal\_neg\_mul} say that \texttt{to\_class} respects addition. The actual Lean proof is slightly technical, using the new library lemmas \texttt{class\_group.mk\_eq\_one\_of\_coe\_ideal} and \texttt{class\_group.mk\_eq\_mk\_of\_coe\_ideal} to reduce the equality between ideal classes arising from integral ideals to an equality between their underlying integral ideals up to multiplication by principal integral ideals, so the tactic mode proof below will only be sketched as a comment for the sake of brevity.
\begin{lstlisting}
@[simps] def to_class : W.point →+ additive (class_group W.coordinate_ring) :=
  { to_fun    := to_class_fun,
    map_zero' := rfl,
    map_add'  := /- Split the cases for P₁ + P₂. If P₁ = 0 or P₂ = 0, simplify.
                    Otherwise P₁ = (x₁, y₁) and P₂ = (x₂, y₂).
                      If x₁ = x₂ and y₁ = W.neg_Y x₂ y₂, use XY_ideal_neg_mul.
                      Otherwise use XY_ideal_mul_XY_ideal. -/ }
\end{lstlisting}

\pagebreak

\subsection{Injectivity of \texttt{to\_class}}

Injectivity is the statement that $ P_1 = P_2 $ if \texttt{to\_class} of $ P_1 $ equals \texttt{to\_class} of $ P_2 $ for any $ P_1, P_2 \in W(F) $, but a simple variant of \texttt{add\_left\_neg} shows that $ -P_1 + P_2 = 0 $ precisely when $ P_1 = P_2 $. Since \texttt{to\_class} is a \texttt{add\_monoid\_hom}, injectivity is equivalent to showing that \texttt{to\_class} of $ P $ is trivial implies $ P = 0 $ for any $ P \in W(F) $. In other words, it suffices to show that the integral ideal $ \langle X - x, Y - y \rangle $ is never principal for any affine point $ (x, y) \in W(F) $.

The approach taken circles around the fact that $ F[W] $ is a free $ F[X] $-algebra of finite rank, so it carries the notion of a \textbf{norm} $ \mathrm{Nm} : F[W] \to F[X] $. If $ f \in F[W] $, then $ \mathrm{Nm}(f) \in F[X] $ may be given by the determinant of left multiplication by $ f $ as an $ F[X] $-linear map, which is most easily computed by exhibiting an explicit basis $ \{1, Y\} $ of $ F[W] $ over $ F[X] $.
\begin{lstlisting}
lemma monic_polynomial : W.polynomial.monic
lemma nat_degree_polynomial : W.polynomial.nat_degree = 2

def basis : basis (fin 2) F[X] W.coordinate_ring :=
  (adjoin_root.power_basis' W.monic_polynomial).basis.reindex
    (fin_congr W.nat_degree_polynomial)
\end{lstlisting}
Here, \texttt{adjoin\_root.power\_basis'} returns the canonical basis of powers $ \{Y^i : 0 \le i < \deg_Y(W(X, Y))\} $, given the proof \texttt{monic\_polynomial} that $ W(X, Y) $ is monic. This is a type indexed by the finite type with $ \deg_Y(W(X, Y)) $ elements, which can be reindexed by the canonical finite type with two elements, since $ \deg_Y(W(X, Y)) = 2 $ by \texttt{nat\_degree\_polynomial}.

With this basis, any element $ f \in F[W] $ may be expressed uniquely as $ f = p(X) + q(X)Y $ with $ p(X), q(X) \in F[X] $, and the degree \footnote{\texttt{polynomial.degree} where $ \deg_X(0) := -\infty $ rather than \texttt{polynomial.nat\_degree} where $ \deg_X(0) := 0 $} of its norm can be computed directly.

\begin{lemma}
\label{lem:norm_smul_basis}
For any $ p(X), q(X) \in F[X] $,
\[ \deg_X(\mathrm{Nm}(p(X) + q(X)Y)) = \max(2\deg_X(p(X)), 2\deg_X(q(X)) + 3). \]
\end{lemma}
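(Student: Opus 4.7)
The plan is to compute $\mathrm{Nm}(f)$ explicitly for $f = p(X) + q(X)Y$ by writing out the matrix of left multiplication by $f$ in the basis $\{1, Y\}$, and then analyse the degree by a parity argument.

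First, I would use the Weierstrass relation to rewrite $Y^2$ in $F[W]$: since $W(X,Y)=0$, we have
\[ Y^2 = -(a_1X + a_3)Y + (X^3 + a_2X^2 + a_4X + a_6). \]
Applying this to $f \cdot 1 = p(X) + q(X)Y$ and $f \cdot Y = p(X)Y + q(X)Y^2$, I read off the matrix of multiplication by $f$ in the basis $\{1, Y\}$ as
\[ M_f = \begin{pmatrix} p(X) & q(X)(X^3 + a_2X^2 + a_4X + a_6) \\ q(X) & p(X) - q(X)(a_1X + a_3) \end{pmatrix}, \]
so that
\[ \mathrm{Nm}(f) = \det(M_f) = p(X)^2 - (a_1X + a_3)\,p(X)q(X) - (X^3 + a_2X^2 + a_4X + a_6)\,q(X)^2. \]

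Next I would analyse the degrees of the three summands. The first has degree $2\deg_X(p(X))$ (even), the third has degree $2\deg_X(q(X)) + 3$ (odd), and the middle one has degree at most $\deg_X(p(X)) + \deg_X(q(X)) + 1$. Since an even integer and an odd integer can never coincide, the leading terms of the first and third summands cannot cancel, and the maximum
\[ \max(2\deg_X(p(X)),\, 2\deg_X(q(X)) + 3) \]
is attained uniquely. A short case split on whether $\deg_X(p(X)) \le \deg_X(q(X))$ or not then shows that the middle summand is strictly dominated by this maximum: if $\deg_X(p(X)) \le \deg_X(q(X))$ then $\deg_X(p(X)) + \deg_X(q(X)) + 1 \le 2\deg_X(q(X)) + 1 < 2\deg_X(q(X)) + 3$, and otherwise $\deg_X(p(X)) \ge \deg_X(q(X)) + 1$ gives $\deg_X(p(X)) + \deg_X(q(X)) + 1 \le 2\deg_X(p(X))$. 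Hence no cancellation occurs and the claimed degree equality follows.

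The main obstacle I expect is bookkeeping rather than mathematics: setting up the matrix representation of multiplication on \texttt{W.coordinate\_ring} in the reindexed \texttt{basis}, and manipulating \texttt{polynomial.degree} in \texttt{WithBot} arithmetic so that the argument remains valid when $p(X) = 0$ or $q(X) = 0$ (in which case the corresponding summand in $\max$ becomes $\bot$ and the formula reduces to the degree of the surviving term). The parity argument itself is clean, but carrying it out uniformly in the presence of $\bot$ will likely require a small case split on vanishing of $p$ and $q$, each case reducing to the degree of a product of known polynomials.
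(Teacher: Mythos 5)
Your proposal is correct and takes essentially the same route as the paper: compute $\mathrm{Nm}(f)$ as the determinant of the matrix of multiplication by $f$ in the basis $\{1,Y\}$, obtain $p(X)^2 - (a_1X+a_3)p(X)q(X) - (X^3+a_2X^2+a_4X+a_6)q(X)^2$, and compare the degrees of the three summands (the paper likewise treats $\deg_X(0)=-\infty$ via \texttt{polynomial.degree}). The only cosmetic difference is the case split: the paper splits at $\deg p \le \deg q + 1$ versus $\deg q + 1 < \deg p$, which makes the strict domination of the cross term by the maximum immediate even at the boundary $\deg p = \deg q + 1$, where your non-strict bound $\deg p + \deg q + 1 \le 2\deg p$ needs the extra (true, and implicit in your parity remark) observation that the maximum there is $2\deg q + 3 > 2\deg p$.
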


\begin{proof}
Let $ f := p(X) + q(X)Y $. In $ F[W] $ with the basis $ \{1, Y\} $ over $ F[X] $,
\begin{align*}
\mathrm{Nm}(f)
& \equiv \det\begin{pmatrix} p(X) & q(X) \\ q(X)(X^3 + a_2X^2 + a_4X + a_6) & p(X) - q(X)(a_1X + a_3) \end{pmatrix} \\
& = p(X)^2 - p(X)q(X)(a_1X + a_3) - q(X)^2(X^3 + a_2X^2 + a_4X + a_6).
\end{align*}
Let $ p := \deg_X(p(X)) $ and $ q := \deg_X(q(X)) $. Then
\[ \deg_X(p(X)^2) = 2p, \qquad \deg_X(q(X)^2(X^3 + a_2X^2 + a_4X + a_6)) = 2q + 3, \]
\[ \deg_X(p(X)q(X)(a_1X + a_3)) \le p + q + 1. \]
If $ p \le q + 1 $, then both $ p + q + 1 < 2q + 3 $ and $ 2p < 2q + 3 $, so $ \deg_X(\mathrm{Nm}(f)) = 2q + 3 $. Otherwise $ q + 1 < p $, then both $ p + q + 1 < 2p $ and $ 2q + 3 < 2p $, so $ \deg_X(\mathrm{Nm}(f)) = 2p $.
\end{proof}

\cref{lem:norm_smul_basis} is \texttt{norm\_smul\_basis}, and it follows by considering cases that $ \deg_X \mathrm{Nm}(f) \ne 1 $ for any $ f \in F[W] $, which is formalised as \texttt{nat\_degree\_norm\_ne\_one}.

\begin{remark}
Geometrically, $ \mathrm{Nm}(f) $ is the order of the pole of the rational function $ f \in F(W) $ at $ \mathcal{O}_W $. Using the norm allows for a purely algebraic argument for injectivity, which was inspired from an exercise in Hartshorne that assumes a short Weierstrass model where $ \mathrm{char}(F) \ne 2 $ \cite[Exercise I.6.2]{hartshorne}. This was also the last missing step in the whole argument, as JX only started computing degrees after he saw Borcherds's solutions to the exercise \cite{borcherds}.
\end{remark}

On the other hand, this degree is also the dimension of an $ F $-vector space.

\pagebreak

\begin{lemma}
\label{lem:finrank_quotient_span_eq_nat_degree_norm}
For any nonzero $ f \in F[W] $,
\[ \deg_X(\mathrm{Nm}(f)) = \dim_F(F[W] / \langle f \rangle). \]
\end{lemma}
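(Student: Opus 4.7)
The proof will use the fact that $F[W]$ is a free $F[X]$-module of rank $2$ (with basis $\{1, Y\}$ given by \texttt{W.basis}) to reduce the claim to a standard statement over the PID $F[X]$. Multiplication by $f$ defines an $F[X]$-linear endomorphism $\mu_f$ of $F[W]$, whose matrix in this basis I will call $M_f \in \mathrm{Mat}_2(F[X])$. By the very definition of \texttt{algebra.norm} on a free module of finite rank, $\mathrm{Nm}(f) = \det M_f$. Since $F[W]$ is an integral domain by \cref{lem:is_domain} and $f \ne 0$, the endomorphism $\mu_f$ is injective, and hence $\det M_f = \mathrm{Nm}(f) \ne 0$ in $F[X]$.

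I would then set up the short exact sequence of $F[X]$-modules
\[ 0 \longrightarrow F[W] \xrightarrow{\mu_f} F[W] \longrightarrow F[W]/\langle f \rangle \longrightarrow 0, \]
and transport it across the basis $\{1, Y\}$ to identify $F[W]/\langle f \rangle$ with the cokernel of $M_f$ acting on $F[X]^2$. Applying the Smith normal form over the PID $F[X]$, there exist $P, Q \in GL_2(F[X])$ and nonzero $d_1, d_2 \in F[X]$ with $P M_f Q = \mathrm{diag}(d_1, d_2)$, yielding an $F$-vector space isomorphism $F[W]/\langle f \rangle \cong F[X]/(d_1) \oplus F[X]/(d_2)$. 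Using $\dim_F F[X]/(d) = \deg_X(d)$ for any nonzero $d \in F[X]$, I obtain
\begin{align*}
\dim_F F[W]/\langle f \rangle &= \deg_X(d_1) + \deg_X(d_2) = \deg_X(d_1 d_2) \\
&= \deg_X(\det M_f) = \deg_X(\mathrm{Nm}(f)),
\end{align*}
where the penultimate equality holds because $\det P, \det Q \in F[X]^\times = F^\times$ are constants of degree zero.

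In \texttt{mathlib}, much of this should already be packaged: the identity $\mathrm{Nm}(f) = \det \mu_f$ is built into \texttt{algebra.norm}, while the general fact that the $F$-dimension of the cokernel of an injective $F[X]$-endomorphism of a free $F[X]$-module of finite rank equals the degree of its determinant should follow from the Smith-normal-form infrastructure together with the structure theorem for finitely generated modules over a PID. The main obstacle is therefore library assembly rather than new mathematics: plumbing \texttt{algebra.norm}, \texttt{linear\_map.det}, the PID structure of $F[X]$, and the rank-$2$ freeness of $F[W]$ together coherently, rather than re-deriving the argument by hand in the style of \cref{lem:norm_smul_basis}.
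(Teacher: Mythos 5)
Your proposal is correct and follows essentially the same route as the paper: the basis $\{1,Y\}$ of $F[W]$ over $F[X]$, the norm as the determinant of multiplication by $f$, the Smith normal form over the PID $F[X]$ to decompose $F[W]/\langle f\rangle$ as $F[X]/(d_1)\oplus F[X]/(d_2)$, and the observation that the change-of-basis determinants are units, hence constants of degree zero. The paper's formalisation indeed leans on the existing Smith-normal-form infrastructure (via \texttt{ideal.quotient\_equiv\_pi\_span}) exactly as you anticipate.
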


\begin{proof}
In $ F[W] $ with the basis $ \{1, Y\} $ over $ F[X] $, multiplication by $ f $ as an $ F[X] $-linear map can be represented by a square matrix $ [f] $ over $ F[X] $, which has a Smith normal form $ M[f]N $, a diagonal matrix with diagonal entries some nonzero $ p(X), q(X) \in F[X] $, for some invertible matrices $ M $ and $ N $ over $ F[X] $. Now the quotient by $ f $ decomposes as a direct sum
\[ F[W] / \langle f \rangle \cong F[X] / \langle p(X) \rangle \oplus F[X] / \langle q(X) \rangle, \]
whose dimension as $ F $-vector spaces are precisely $ \deg_X(p(X)) $ and $ \deg_X(q(X)) $ respectively. On the other hand, the determinant of $ M[f]N $ is $ \det(M)\mathrm{Nm}(f)\det(N) = p(X)q(X) $, so
\[ \deg_X(\mathrm{Nm}(f)) = \deg_X(p(X)) + \deg_X(q(X)), \]
since the units $ \det(M), \det(N) \in F[X] $ are nonzero constant polynomials.
\end{proof}

\cref{lem:finrank_quotient_span_eq_nat_degree_norm} is \texttt{finrank\_quotient\_span\_eq\_nat\_degree\_norm}, and crucially uses the library lemma \texttt{ideal.quotient\_equiv\_pi\_span} to decompose the quotient by $ \langle f \rangle $ into a direct sum of quotients by its Smith coefficients. It is worth noting that the same argument clearly works more generally by replacing $ F[W] $ by any $ F[X] $-algebra with a finite basis. The proof of the injectivity of \texttt{to\_class} then proceeds by contradiction.

\begin{lemma}
\label{lem:to_class_injective}
The function $ W(F) \to \mathrm{Cl}(F[W]) $ is injective.
\end{lemma}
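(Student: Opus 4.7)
The proof reduces, as sketched in the paragraphs immediately preceding the lemma, to showing that for any nonsingular affine point $(x, y) \in W(F)$, the integral ideal $\langle X - x, Y - y \rangle \subseteq F[W]$ is not principal. The plan is to argue by contradiction: assume $\langle X - x, Y - y \rangle = \langle f \rangle$ for some $f \in F[W]$, then compute $\dim_F(F[W]/\langle X - x, Y - y \rangle)$ in two different ways and derive $1 \ne 1$.

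On the one hand, evaluation at $(x, y)$ is a surjective $F$-algebra homomorphism $F[X, Y] \to F$ whose kernel contains $\langle X - x, Y - y \rangle$, and it induces an isomorphism $F[X, Y]/\langle X - x, Y - y \rangle \cong F$. Since $(x, y) \in W$ gives $W(x, y) = 0$, the defining element $W(X, Y)$ of $F[W]$ already lies in $\langle X - x, Y - y \rangle$, so descending the quotient to $F[W]$ preserves the isomorphism: $F[W] / \langle X - x, Y - y \rangle \cong F$, which has $F$-dimension exactly $1$. On the other hand, by \cref{lem:finrank_quotient_span_eq_nat_degree_norm}, $\dim_F(F[W]/\langle f \rangle) = \deg_X(\mathrm{Nm}(f))$, and \texttt{nat\_degree\_norm\_ne\_one} forbids this degree from equalling $1$. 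Together these two computations of $\dim_F(F[W]/\langle X - x, Y - y \rangle)$ contradict each other.

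The main obstacle I anticipate is more formal than mathematical: cleanly extracting a generator $f \in F[W]$ from the hypothesis that the class of the invertible fractional ideal $\langle X - x, Y - y \rangle$ is trivial in $\mathrm{Cl}(F[W])$. One needs to unfold the class-group construction, observe that a principal fractional ideal of $F(W)$ whose underlying submodule is contained in $F[W]$ is necessarily generated by an element of $F[W]$ itself, and then route this generator through both dimension computations while keeping the coercions between ideals, fractional ideals, and their class-group images straight. Once this bookkeeping is in place, the mathematical core is a brief two-line contradiction.
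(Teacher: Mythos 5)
Your proposal is correct and follows essentially the same route as the paper: reduce injectivity to non-principality of $\langle X - x, Y - y \rangle$, then derive a contradiction by computing $\dim_F(F[W]/\langle X - x, Y - y\rangle)$ once via the evaluation map (getting $1$) and once via \cref{lem:finrank_quotient_span_eq_nat_degree_norm} together with \texttt{nat\_degree\_norm\_ne\_one} (forbidding $1$). The only detail the paper makes explicit that you elide is the polynomial identity exhibiting $W(X,Y) \in \langle X - x, Y - y \rangle$, which is needed for the formalised descent step.
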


\begin{proof}
Let $ (x, y) \in W(F) $. It suffices to show that $ \langle X - x, Y - y \rangle $ is not principal, so suppose for a contradiction that it were generated by some $ f \in F[W] $. By \cref{lem:finrank_quotient_span_eq_nat_degree_norm},
\[ \deg_X(\mathrm{Nm}(f)) = \dim_F(F[W] / \langle f \rangle) = \dim_F(F[W] / \langle X - x, Y - y \rangle). \]
On the other hand, evaluating at $ (X, Y) = (x, y) $ is a surjective homomorphism $ F[X, Y] \to F $ with kernel $ \langle X - x, Y - y \rangle $, and this contains the element $ W(X, Y) $ since $ W(x, y) = 0 $. Explicitly, this follows from the identity in $ F[X, Y] $ given by
\[ W(X, Y) = (a_1y - (X^2 + (x + a_2)X + (x^2 + a_2x + a_4)))(X - x) + (y - \sigma_X(Y))(Y - y). \]
Thus by the first and third isomorphism theorems, there are $ F $-algebra isomorphisms
\[ F[W] / \langle X - x, Y - y \rangle \xrightarrow{\sim} F[X, Y] / \langle W(X, Y), X - x, Y - y \rangle = F[X, Y] / \langle X - x, Y - y \rangle \xrightarrow{\sim} F, \]
so $ \deg_X(\mathrm{Nm}(f)) = \dim_F(F) = 1 $, which contradicts \texttt{nat\_degree\_norm\_ne\_one}.
\end{proof}

\begin{remark}
\cref{lem:to_class_injective} can also be proven without the Smith normal form, by considering the ideal generated by the norms of elements in $ \langle X - x, Y - y \rangle $ for $ (x, y) \in W(F) $, namely
\[ I := \langle (X - x)^2, (X - x)((Y - y) + (\sigma_X(Y) - y)), (Y - y)(\sigma_X(Y) - y) \rangle. \]
On one hand, as an integral ideal in $ F[W] $, it can be shown that $ I $ is generated by the linear polynomial $ X - x $. On the other hand, if $ \langle X - x, Y - y \rangle $ were generated by some $ f \in F[W] $, then its ideal norm $ I $ is generated by $ \mathrm{Nm}(f) $, which cannot be linear by \cref{lem:norm_smul_basis}.
\end{remark}

\cref{lem:to_class_injective} is \texttt{to\_class\_injective}, and allows the proofs of commutativity and associativity in $ \mathrm{Cl}(F[W]) $ to be pulled back to $ W(F) $, thus proving \cref{prop:group}.
\begin{lstlisting}
lemma add_comm (P₁ P₂ : W.point) : P₁ + P₂ = P₂ + P₁
lemma add_assoc (P₁ P₂ P₃ : W.point) : (P₁ + P₂) + P₃ = P₁ + (P₂ + P₃)

instance : add_comm_group W.point :=
  ⟨zero, neg, add, zero_add, add_zero, add_left_neg, add_comm, add_assoc⟩
\end{lstlisting}

\pagebreak

\section{Discussion}
\label{sec:discuss}

\subsection{Related work}
\label{sec:relate}

As aforementioned, formalising the group law of an elliptic curve $ E $ over a field $ F $ is not novel, and has been done in several theorem provers to varying extents. Friedl (1998) \cite{friedl} gave a computational proof in the short Weierstrass model, leaving some of the heavy computations for associativity to CoCoA as a trusted oracle, and his argument was subsequently formalised by Th\'ery (2007) \cite{thery} in Coq. Fox, Gordon, and Hurd (2006) \cite{fox} formalised the addition law in the full Weierstrass model in HOL, but did not prove associativity. Hales and Raya (2020) \cite{hales} formalised a computational proof in Isabelle, but worked in the alternative Edwards model, which also fails to be an elliptic curve when $ \mathrm{char}(F) = 2 $.

The first known formalisation of an algebro-geometric proof was done by Bartzia and Strub (2014) \cite{bartzia}, who also worked in the short Weierstrass model. In 3,500 lines of Coq, they formalised the geometric notion of a Weil divisor \footnote{a formal $ \mathbb{Z} $-linear combination of points $ P \in E $ weighted by the order of vanishing of $ f $ at $ P $} of a rational function $ f \in F(E) $ to define the degree-zero Weil divisor class group $ \mathrm{Pic}^0(E) $, which is isomorphic to the Picard group $ \mathrm{Pic}(\mathrm{Spec}(F[E])) $ since $ E $ is nonsingular \cite[Corollary II.6.16]{hartshorne}. In another 6,500 lines of Coq, they constructed an analogous bijection between $ \mathrm{Pic}^0(E) $ and the points of $ E $ over the algebraic closure, but their argument is a simplification of the typical conceptual proof via the Riemann--Roch theorem and does not generalise easily to $ \mathrm{char}(F) = 2 $. In contrast, the algebraic proof with the ideal class group $ \mathrm{Cl}(F[E]) $ only spans 1,500 lines of Lean 3, avoiding the geometric theory and reusing much of the well-maintained algebraic libraries.

\subsection{Experimental attempts}
\label{sec:experiment}

The entire development process went through several iterations of trial and error, and various definitions of elliptic curves were proposed in Buzzard's topic on Zulip. The abstract definition as in \cref{rem} would be ideal, but algebraic geometry in \texttt{mathlib} is at its primitive stages, where describing properties of scheme morphisms like smoothness or properness, or defining the genus of a curve, would be a challenge. Since the Weierstrass model is universal over fields, the general consensus was that proving its equivalence with the abstract definition should proceed independently from proving theorems under the Weierstrass model.

Unfortunately, proving associativity became a huge issue in this model. The obvious first course of action is to check the equalities in all possible combinations of cases of addition, using the \texttt{field\_simp} and \texttt{ring} tactics to normalise rational expressions. In doing this, the number of cases quickly explode, and in the nontrivial cases of affine addition, the polynomial expressions involved become gargantuan. There are optimisations that could be made to reduce the number of cases, as coded by Masdeu \cite{masdeu} adapting Friedl's original argument into Lean, but a good way to manipulate the expressions remains elusive. In the generic case where three nonsingular affine points $ P_1, P_2, P_3 \in W(F) $ are in general position, \footnote{the affine points $ P_1 $, $ P_2 $, $ P_3 $, $ P_1 + P_2 $, $ P_1 + P_3 $, and $ P_2 + P_3 $ have pairwise distinct $ X $-coordinates} experiments by DKA with the aid of SageMath suggested that proving $ (P_1 + P_2) + P_3 = P_1 + (P_2 + P_3) $ by bashing out the algebra would involve polynomials each with tens of thousands of monomials, which is highly time-consuming in a formal system and definitely infeasible to work out by hand, despite taking only half a second in SageMath. The \texttt{ring} tactic, which uses proof generation by normalising to Horner form \cite{gregoire}, seems to be an order of magnitude too slow to work with such expressions, resulting in deterministic timeouts.

\pagebreak

The main culprits for the huge polynomials are the $ XY $ and $ Y $ terms in the Weierstrass equation, which do not allow even exponents of $ Y $ in the expressions to be substituted with polynomials solely in $ X $. When $ \mathrm{char}(F) \ne 2 $, these terms disappear with a change of variables, reducing the expressions to the computationally feasible range of hundreds of terms, hence enabling the work by Th\'ery (2007), or a transformation to the Edwards model whose group law was already formalised by Hales and Raya (2020). In principle, since $ 2 = 0 $ when $ \mathrm{char}(F) = 2 $, enough multiples of $ 2 $ may be cancelled from the expressions until a brute-force attack becomes feasible, but \texttt{mathlib} currently has no good tactic to do these cancellations except to manually extract these multiples of $ 2 $, such as by rewriting the expressions into the form $ p + 2q $ using \texttt{ring}, which is too slow in the first place, and deleting $ 2q $.

The mathematical literature typically deals with associativity by providing alternative proofs, in addition to the aforementioned algebro-geometric proof via the Picard group. One notable method goes via the uniformisation theorem in complex analysis \cite[Corollary VI.5.1.1]{silverman}, but \texttt{mathlib} also lacks much of the complex-analytic machinery to prove it, and this approach is only valid for $ \mathrm{char}(F) = 0 $ via the Lefschetz principle. Another approach uses the Cayley--Bacharach theorem in projective geometry \cite[Lemma 7.1]{cassels}, which proves associativity generically by a dimension counting argument. By continuing on Masdeu's branch, this approach seemed viable, but required redefining Weierstrass curves in projective coordinates and a convenient way to switch back to affine coordinates via dehomogenisation. Furthermore, the argument fails in a less generic case with a repeated point, which could be fixed by introducing an ad-hoc notion of intersection multiplicity between a line and a cubic, as suggested by Stoll. DKA started refactoring the definitions in an attempt at this approach, but ultimately switched to the current approach when proposed by JX. Note that an explicit exposition of a version of this argument can also be found in Nuida (2021) \cite{nuida}.

\subsection{Implementation issues}
\label{sec:implementation}

\subparagraph{Bivariate polynomials}

A bivariate polynomial in $ X $ and $ Y $ over a commutative ring $ R $ is typically represented in \texttt{mathlib} by a finitely supported function $ (\{0, 1\} \to \mathbb{N}) \to R $, associating a function $ f : \{0, 1\} \to \mathbb{N} $ to the coefficient of $ X^{f(0)}Y^{f(1)} $. This representation is very cumbersome when performing concrete manipulations, such as those in \cref{lem:XY_ideal_neg_mul} and \cref{lem:XY_ideal_mul_XY_ideal}, since explicit functions $ \{0, 1\} \to \mathbb{N} $ are needed to obtain coefficients.

In contrast, a polynomial in $ X $ over $ R $ is represented in \texttt{mathlib} by a finitely supported function $ \mathbb{N} \to R $, associating a natural number $ n \in \mathbb{N} $ to the coefficient of $ X^n $. A polynomial in $ Y $ with coefficients polynomials in $ X $ performs the same function as a bivariate polynomial in $ X $ and $ Y $, but the coefficient of $ X^nY^m $ is obtained by sequentially supplying two natural numbers $ m, n \in \mathbb{N} $. This has the additional advantage of aligning with the API for \texttt{adjoin\_root}, which gives a power basis needed in the proof of injectivity.

This representation does have the slightly awkward problem that $ X $ is denoted by \texttt{C X} while $ Y $ is denoted by \texttt{X}, but this is easily fixed by introducing \texttt{notation Y := X} and \texttt{notation R[X][Y] := polynomial (polynomial R)}. A more serious drawback is that existing results about multivariate polynomials, such as the Nullstellensatz, do not carry over to this representation, so explicit proofs with polynomial identities are sometimes necessary, namely in the proofs of \cref{lem:XY_ideal_neg_mul}, \cref{lem:XY_ideal_mul_XY_ideal}, and \cref{lem:to_class_injective}. Another issue is that the partial derivative with respect to $ X $ is obtained by applying the \texttt{polynomial.derivative} linear map to each coefficient of the polynomial in $ Y $, but the current \texttt{polynomial.map} only accepts a ring homomorphism, which explains why the partial derivatives \texttt{polynomial\_X} and \texttt{polynomial\_Y} were defined manually instead. In light of this, it has been suggested that \texttt{polynomial.map} should be refactored to accept set-theoretic functions instead.

\pagebreak

\subparagraph{Performance issues}

In the original definition of \texttt{to\_class}, it was observed that the function \texttt{class\_group.mk}, when applied to an invertible fractional ideal of \texttt{coordinate\_ring}, took a while to compile. Baanen diagnosed this problem and proposed the following solution \cite{angdinata}.
\begin{lstlisting}
local attribute [irreducible] coordinate_ring.comm_ring
\end{lstlisting}
Although \texttt{coordinate\_ring} is marked as \texttt{irreducible}, its \texttt{derive comm\_ring} tag generates a \texttt{reducible} instance of \texttt{comm\_ring}. In certain circumstances this is extremely slow, because the number of times an instance gets unified grows exponentially with its depth due to a lack of caching, and Baanen's solution was to force its \texttt{comm\_ring} instance to be irreducible locally whenever necessary. Note that this should have been fixed in Lean 4, and the port of \texttt{mathlib} to Lean 4 is expected to finish in a few months' time.

There are other performance issues that led to timeouts during development, but they were fixed by generalising the statements so they involve less complicated types.

\subparagraph{Proof automation}

The proofs of many basic lemmas often reduce to checking an equality of two polynomial expressions, such as in \cref{lem:nonsingular_neg} and \cref{lem:nonsingular_add}, but equality often holds only under some local hypotheses. Rather than rewriting these into the goal and applying the \texttt{ring} tactic, it is convenient to use \texttt{linear\_combination}, a newly-developed tactic that subtracts a linear combination of known equalities from the goal, before applying \texttt{ring}.

When several rewrite lemmas are often used together, it is also convenient to write a custom tactic to chain them. For instance, the evaluation map \texttt{eval} on a polynomial expression is often propagated inwards, so grouping the lemmas allows for a single tactic call.
\begin{lstlisting}
meta def eval_simp : tactic unit :=
`[simp only [eval_C, eval_X, eval_neg, eval_add, eval_sub, eval_mul, eval_pow]]
\end{lstlisting}

\subsection{Future projects}

Formalising the group law opens the doors to an expansive array of possible further work. An immediate project would be to enrich the API for nonsingular points by adding basic functorial properties with respect to a base change to a field extension $ K / F $. For instance, this could be defining the induced map $ E(F) \to E(K) $, or if $ K / F $ is Galois, computing the subgroup of $ E(K) $ invariant under the action of $ \mathrm{Gal}(K / F) $ to be precisely $ E(F) $.

It is worth noting the two ongoing projects by each of the two authors. DKA is formalising an inductive definition of division polynomials to understand the structure of the $ n $-torsion subgroup $ E[n] $ to compute the structure of the $ \ell $-adic Tate module $ \varprojlim_n E[\ell^n] $, while JX is formalising a proof that the reduction map $ E(K) \to E(R / \mathfrak{m}) $ is a group homomorphism for a discrete valuation ring $ R $ with fraction field $ K $ and maximal ideal $ \mathfrak{m} $.

In the longer run, one could explore the rich arithmetic theory over specific fields. Once the theory of local fields is sufficiently developed in \texttt{mathlib}, one could define the formal group of an elliptic curve, classify its reduction types, or state Tate's algorithm. These will be useful for the global theory, where one could define the Selmer and Tate--Shafarevich groups, give a Galois cohomological proof of the Mordell--Weil theorem, or state the full Birch and Swinnerton-Dyer conjecture. Over a finite field, one could verify the correctness of primality and factorisation algorithms as well as cryptographic protocols, or prove the Hasse--Weil bound or the Weil conjectures for elliptic curves.

Ultimately, a long term goal would be to redefine elliptic curves in \texttt{mathlib} as in \cref{rem} and prove \cref{prop:weierstrass}, but this will require a version of the Riemann--Roch theorem, whose proof will require a robust theory of sheaves of modules and their cohomology.

\pagebreak

\bibliography{main}

\begin{thebibliography}{10}

\bibitem{angdinata}
David Angdinata.
\newblock {class\_group}.
\newblock URL:
  \url{https://leanprover-community.github.io/archive/stream/116395-maths/topic/Why.20is.20class_group.2Emk.20so.20slow.3F.html}.

\bibitem{atkin}
A.~O.~L. Atkin and François Morain.
\newblock {Elliptic curves and primality proving}.
\newblock {\em Mathematics of Computation}, 61(203):29--68, 1993.
\newblock \href {https://doi.org/10.2307/2152935} {\path{doi:10.2307/2152935}}.

\bibitem{baanen}
Anne Baanen, Sander Dahmen, Ashvni Narayanan, and Filippo Nuccio
  Mortarino~Majno di~Capriglio.
\newblock {A formalization of Dedekind domains and class groups of global
  fields}.
\newblock {\em Journal of Automated Reasoning}, 66:611--637, 2022.
\newblock \href {https://doi.org/10.1007/s10817-022-09644-0}
  {\path{doi:10.1007/s10817-022-09644-0}}.

\bibitem{bartzia}
Evmorfia-Iro Bartzia and Pierre-Yves Strub.
\newblock {A formal library for elliptic curves in the Coq proof assistant}.
\newblock {\em ITP}, pages 77--92, 2014.
\newblock \href {https://doi.org/10.1007/978-3-319-08970-6_6}
  {\path{doi:10.1007/978-3-319-08970-6_6}}.

\bibitem{borcherds}
Richard Borcherds.
\newblock {Hartshorne, Chapter 1.6, Answers to Exercises}.
\newblock URL: \url{https://math.berkeley.edu/~reb/courses/256A/1.6.pdf}.

\bibitem{buzzard}
Kevin Buzzard.
\newblock {Thoughts on elliptic curves}.
\newblock URL:
  \url{https://leanprover-community.github.io/archive/stream/116395-maths/topic/thoughts.20on.20elliptic.20curves.html}.

\bibitem{cassels}
J.~W.~S. Cassels.
\newblock {\em Lectures on Elliptic Curves}.
\newblock Cambridge University Press, 1991.

\bibitem{chapman}
Robin Chapman.
\newblock {Why is an elliptic curve a group?}
\newblock URL: \url{https://mathoverflow.net/q/20623}.

\bibitem{nist}
Lily Chen, Dustin Moody, Karen Randall, Andrew Regenscheid, and Angela
  Robinson.
\newblock {Recommendations for discrete logarithm-based cryptography: elliptic
  curve domain parameters}, 2023.
\newblock \href {https://doi.org/10.6028/NIST.SP.800-186}
  {\path{doi:10.6028/NIST.SP.800-186}}.

\bibitem{docs}
The~Mathlib Community.
\newblock {mathlib documentation}.
\newblock URL: \url{https://leanprover-community.github.io/mathlib_docs/}.

\bibitem{mathlib}
The~Mathlib Community.
\newblock {The Lean mathematical library}.
\newblock {\em CPP}, 2020.
\newblock \href {https://doi.org/10.1145/3372885.3373824}
  {\path{doi:10.1145/3372885.3373824}}.

\bibitem{lean}
Leonardo de~Moura, Soonho Kong, Jeremy Avigad, Floris van Doorn, and Jakob von
  Raumer.
\newblock {The Lean theorem prover (system description)}.
\newblock {\em CADE}, 2015.
\newblock \href {https://doi.org/10.1007/978-3-319-21401-6_26}
  {\path{doi:10.1007/978-3-319-21401-6_26}}.

\bibitem{ecdh}
Pierre~Philip du~Preez.
\newblock {Understanding EC Diffie-Hellman}.
\newblock URL:
  \url{https://medium.com/swlh/understanding-ec-diffie-hellman-9c07be338d4a}.

\bibitem{eisenbud}
David Eisenbud.
\newblock {\em Commutative algebra with a view toward algebraic geometry}.
\newblock Springer New York, 1995.

\bibitem{fox}
Anthony Fox, Mike Gordon, and Joe Hurd.
\newblock {Formalized elliptic curve cryptography}.
\newblock {\em High Confidence Software and Systems}, 2006.

\bibitem{friedl}
Stefan Friedl.
\newblock {An elementary proof of the group law for elliptic curves}.
\newblock {\em Groups Complexity Cryptology}, 9(2):117--123, 2017.
\newblock \href {https://doi.org/10.1515/gcc-2017-0010}
  {\path{doi:10.1515/gcc-2017-0010}}.

\bibitem{gregoire}
Benjamin Grégoire and Assia Mahboubi.
\newblock {Proving equalities in a commutative ring done right in Coq}.
\newblock {\em Lecture Notes in Computer Science}, 3603:98--113, 2005.
\newblock \href {https://doi.org/10.1007/11541868_7}
  {\path{doi:10.1007/11541868_7}}.

\bibitem{hales}
Thomas Hales and Rodrigo Raya.
\newblock {Formal proof of the group law for Edwards elliptic curves}.
\newblock {\em Automated Reasoning}, 12167:254--269, 2020.
\newblock \href {https://doi.org/10.1007/978-3-030-51054-1_15}
  {\path{doi:10.1007/978-3-030-51054-1_15}}.

\bibitem{langlands}
Kevin Hartnett.
\newblock {Math quartet joins forces on unified theory}.
\newblock URL:
  \url{https://www.quantamagazine.org/math-quartet-joins-forces-on-unified-theory-20151208/}.

\bibitem{hartshorne}
Robin Hartshorne.
\newblock {\em Algebraic Geometry}.
\newblock Springer New York, 1977.

\bibitem{katz}
Nicholas Katz and Barry Mazur.
\newblock {\em Arithmetic Moduli of Elliptic Curves}.
\newblock Princeton University Press, 1985.

\bibitem{lenstra}
Hendrik Lenstra.
\newblock {Factoring integers with elliptic curves}.
\newblock {\em Annals of Mathematics}, 126(3):649--673, 1987.
\newblock \href {https://doi.org/10.2307/1971363} {\path{doi:10.2307/1971363}}.

\bibitem{masdeu}
Marc Masdeu.
\newblock {ell\_add\_assoc}.
\newblock URL:
  \url{https://github.com/leanprover-community/mathlib/blob/ell_add_assoc/src/algebraic_geometry/EllipticCurveGroupLaw.lean}.

\bibitem{nuida}
Koji Nuida.
\newblock {An elementary linear-algebraic proof without computer-aided
  arguments for the group law on elliptic curves}.
\newblock {\em International Journal of Mathematics for Industry}, 13(1), 2021.
\newblock \href {https://doi.org/10.1142/S2661335221500015}
  {\path{doi:10.1142/S2661335221500015}}.

\bibitem{russinoff}
David Russinoff.
\newblock {A computationally surveyable proof of the group properties of an
  elliptic curve}.
\newblock {\em In Proceedings ACL2Workshop}, 2017.
\newblock \href {https://doi.org/10.4204/EPTCS.249.3}
  {\path{doi:10.4204/EPTCS.249.3}}.

\bibitem{silverman}
Joseph Silverman.
\newblock {\em The Arithmetic of Elliptic Curves}.
\newblock Springer New York, 2009.

\bibitem{sutherland}
Andrew Sutherland.
\newblock {Algebraic proof of the associativity of the elliptic curve group law
  on curves defined by a short Weierstrass equation, as presented in Lecture 2
  of 18.783}.
\newblock URL:
  \url{https://cocalc.com/share/public_paths/a6a1c2b188bd61d94c3dd3bfd5aa73722e8bd38b}.

\bibitem{thery}
Laurent Théry.
\newblock {Proving the group law for elliptic curves formally}.
\newblock {\em INRIA}, 2007.

\bibitem{bsd}
Andrew Wiles.
\newblock {The Birch and Swinnerton-Dyer conjecture}.
\newblock URL:
  \url{https://www.claymath.org/sites/default/files/birchswin.pdf}.

\bibitem{wiles}
Andrew Wiles.
\newblock {Modular elliptic curves and Fermat's last theorem}.
\newblock {\em Annals of Mathematics}, 141(3):443--551, 1995.
\newblock \href {https://doi.org/10.2307/2118559} {\path{doi:10.2307/2118559}}.

\end{thebibliography}

\end{document}